\newcommand{\bea}{\begin{eqnarray}}
\newcommand{\eea}{\end{eqnarray}}
\def\beq#1#2\eeq{
        \begin{equation}
        \label{#1}
            #2
        \end{equation}}
\newcommand{\al}{\alpha}
\newcommand{\la}{\lambda}
\newcommand{\C}{\mathbb C}
\newcommand{\rw}{\rightarrow}
\def\btheor#1\etheor{
        \begin{theor}
            #1
        \end{theor}
    }
    \def\bsled#1\esled{
        \begin{sled}
            #1
        \end{sled}   }
\newcommand{\bq}{\begin{equation}}
\newcommand{\eq}{\end{equation}}
\newcommand{\bbq}{\begin{equation*}}
\newcommand{\eeq}{\end{equation*}}
\newtheorem{rem}{Remark}
\newtheorem{theorem}{Theorem}
\newtheorem{lemma}{Lemma}
\newtheorem{cor}{Corollary}
\def\hm#1{#1\nobreak\discretionary{}{\hbox{\m@th$#1$}}{}}
\def\mi#1{\discretionary{\hbox{\m@th$#1$}}{\hbox{\m@th$#1$}}{}}
\begin{document}
\title{\bf Hankel determinants for a perturbed Laguerre weight and Painlev\'e \uppercase\expandafter{\romannumeral5} equation}
\author{Min Chen$^{1}$, Yang Chen$^{2}$,  En-Gui Fan$^{1}$\thanks{Corresponding author: faneg@fudan.edu.cn}\\
{\normalsize $^{1}$School of Mathematical Science, Fudan University, Shanghai 200433, P.R. China}\\
{\normalsize $^{2}$Faculty of Science and Technology, Department of Mathematics, University of Macau}
       }
\date{\today}
\maketitle
\begin{abstract}
In this paper, we study Hankel determinants generated from the following perturbed Laguerre weight function,
\bbq
w(x,t)=(x+t)^{\la}x^{\al}e^{-x},\;\; t>0,\; \al>0,\; \al+\la+1>0,\;\; x\in [0,\infty).
\eeq
Under the double scaling scheme, the random matrix size $n\rw \infty,$ $t\rw 0,$ such that $s=4nt$ is finite, we give the uniform asymptotic
approximations of Hankel determinants in terms of a solution of a third-order nonlinear differential equation, which is equivalent to a particular
Painlev\'e \uppercase\expandafter{\romannumeral5} equation (${\rm P_{\uppercase\expandafter{\romannumeral5}}}$). In fact, this ${\rm P_{\uppercase\expandafter{\romannumeral5}}}$ equation is equivalent to the general
Painlev\'e \uppercase\expandafter{\romannumeral3} equation. The asymptotic approximations of the leading coefficients and the recurrence coefficients for the corresponding orthogonal polynomials
also involve the ${\rm P_{\uppercase\expandafter{\romannumeral5}}}$ equation. The asymptotic   analysis is based on our earlier results by using the Deift-Zhou nonlinear steepest descent method.

\end{abstract}
\noindent

\newpage

\setcounter{equation}{0}
\section{Introduction and statement of results}

We consider Hankel determinants are defined by
\bq\label{B02}
D_{n}(t,\al,\la)=\det\left(\int_{\mathbb{R}_{+}}x^{j+k}w(x,t)dx\right)_{j,k=0}^{n-1},\;\;\; \mathbb{R}_{+}=(0,\infty),
\eq
where $w(x,t)$ is the perturbed Laguerre weight as follows
\bq\label{B01}
w(x,t)=(x+t)^{\la}x^{\al}e^{-x},\;\; t>0,\; \al>0,\; \al+\la+1>0,\;\; x\in [0,\infty).
\eq
\par
It is well-known that the above Hankel determinants can also be expressed by
\bq\label{B03}
D_{n}(t,\al,\la)=\prod_{j=0}^{n-1}h_{j}(t)=\frac{1}{n!}Z_{n}(t),
\eq
where $Z_{n}(t)$ are known as the partition function and its multiple integral representation is given by
\bq\label{B04}
Z_{n}(t)=\int_{\mathbb{R}_{+}^{n}}\prod_{1\leq{j<k}\leq{n}}\left(x_{j}-x_{k}\right)^{2}\prod_{\ell=1}^{n}w(x_{\ell},t)dx_{\ell},
\eq
and $\{h_{j}(t): j=0, 1, \ldots, n-1\}$ are the squares of the $L^{2}$ norm of the monic polynomials $\pi_{n}(x)$ orthogonal with respect to the weight function $w(x, t)$ in (\ref{B01}), namely,
\bq\label{B05}
\int_{\mathbb{R}_{+}}\pi_{j}(x)\pi_{k}(x)w(x,t)dx=h_{j}(t)\delta_{jk}.
\eq
Moreover, the above monic orthogonal polynomials satisfy the following three-term recurrence relation,
\bq\label{B11}
x\pi_{n}(x)=\pi_{n+1}(x)+\al_{n}(t)\pi_{n}(x)+\beta_{n}(t)\pi_{n-1}(x).
\eq
It is obviously that the recurrence coefficients $\al_{n}(t)$ and $\beta_{n}(t)$ depend on $t.$
\par
The analysis of Hankel determinant of a random matrix model is an important object in random matrix theory. One finds that it has many important applications in other fields, for instance, a singularly perturbed Hankel determinant was considered by Osipov and Kanzieper \cite{OVAKE2007} from the point of bosonic replica fields theories; it is also considered by Chen and Its \cite{ChenIts2010} and is motivated in part by a finite temperature integrable quantum field theory; the work of Texier and Majumdar \cite{TM2013} described that Hankel determinant may be interpreted as the generating function for the distribution function of the Wigner delay time in chaotic cavities.
\par
There are many methods to study Hankel determinant. Deift, Its and Krasovsky applied the nonlinear steepest descent method for RH problem to study the asymptotics of $n$ dimensional Toeplitz determinants with Fisher-Hartwig singularities and it also provides a powerful way to the analysis of Hankel determinants in \cite{DIK2011}. Adler and Van Moerbeke used a multi-time approach to derive differential equations governing the logarithmic derivatives of Hankel determinants in \cite{AM2001}. For the more information and details of the Deift-Zhou nonlinear steepest descent method, we refer \cite{DKV1999, DX1999}.  The connection between the Hankel determinants isomonodromic $\tau$-function generated by general semi-classical weights and Painlev\'e equations following abstractly by Jimbo, Miwa and Ueno in \cite{JMU1981}, and these are established by Bertola and his collaborators in \cite{BEH2006, B2009}.
\par
We also note that a more general weight function similar with (\ref{B01}), due to there isn't any restriction of the parameter $\la,$ namely, $\la$ is a ``free'' parameter, which appears the characterization of Shannon capacity in the study of the outage and error probability in wireless communication, see \cite{YangMcky2012}. The ``free'' parameter $\la$ ``generates'' the Shannon capacity and $t$ plays the role of the time variable in the ensuring ${\rm P_{\uppercase\expandafter{\romannumeral5}}}$ equation in \cite{YangMcky2012}, and see \cite{BasorChen} for more information. It is worth to point out that the restrictions of $\al$ and $\la$ in the weight function in (\ref{B01}) are necessary conditions to follow the uniform approximation of the corresponding orthogonal polynomials, the more information can be found in \cite{CFC2017}.
\par
It is interesting that the regularly perturbed Laguerre weight in (\ref{B01}) can ``degenerate'' a singularly perturbed Laguerre weight. Heruistically, it maybe seen as follows (s=4nt)
\bbq
(x+t)^{\la}x^{\al}e^{-x}=x^{\al+\la}\left[\left(1+\frac{s}{4nx}\right)^{\frac{4nx}{s}}\right]^{\frac{\la{t}}{x}}e^{-x}\rightarrow x^{\al+\la}e^{-x+\frac{\la{t}}{x}},\;\; n\rightarrow \infty.
\eeq
Xu, Dai and Zhao used the RH approach to study the double scaling limit of the correlation function kernel associated to the singularly perturbed Laguerre weight in \cite{XDZ2014}, and the limiting kernel is a particular ${\rm P_{\uppercase\expandafter{\romannumeral3}}}$ kernel which can also translate to Bessel kernel and Airy kernel in certain conditions, respectively. They also showed a uniform asymptotic approximation of the corresponding Hankel determinants in \cite{XDZ2015}. Mezzadri and Mo \cite{MM2009}, Brightmore, Mezzadri and Mo \cite{BMM2015} considered the partition function generated by a singularly perturbed Gaussion weight, for special case, a particular ${\rm P_{\uppercase\expandafter{\romannumeral3}}}$ equation is also derived therein.
\par
In our pervious paper \cite{CFC2017}, we are interesting in a phase transition phenomenon as $t\rightarrow 0.$ In order to study this phenomenon, we introduced a double scaling scheme, let $n\rw \infty,$ $t\rw 0,$ such that $s=4nt$ is finite, and adopted the Deift-Zhou nonlinear steepest descent method for the corresponding RH problems. We found a new limit of the corresponding eigenvalue correlation function. It is involved a third-order nonlinear differential equation, which is integrable and its Lax pair is also found. It is worth mentioning that this Lax pair do not contain in \cite{FIK2006}, and it also seems as a supplement and perfect the Lax pair first prompted by Kapaev and Hubert \cite{KH1999} because there are two auxiliary functions in their situation, but we just need one auxiliary function and its derivatives to express the other quantities in the Lax pair. Moreover, both of the Lax pairs are equivalent to each other via a simple transformation. Furthermore, the third-order equation can reduce to a particular ${\rm P_{\uppercase\expandafter{\romannumeral5}}}$ equation with a simple transformation. The limiting kernel can degenerates to two different order of Bessel kernels as $s\rw 0,$ $s\rw \infty,$ respectively. For more and concrete details about the behaviors of the kernel at the hard edge of the equilibrium density, see section 1.2 therein.

\subsection{Main results}

\par
Our main results have intimate relations with a particular solution $r(s)$ of the third-order nonlinear differential equation which can be derived from the compatibility condition of the corresponding Lax pair in \cite{CFC2017},
\bq\label{B1}
8s^2r'(2r'+1)r'''-4s^2(1+4r')r''^2+8sr'(2r'+1)r''-4sr'^2(2r'+1)^2+\lambda^2(2r'+1)^2-4\alpha^2r'^{2}=0,
\eq
where $X'$ denotes $dX/ds,$ $\al>0,$ $\al+\la+1>0,$ and $r(s)$ satisfies the following boundary conditions,
\bq\label{B2}
r(0)=\frac{1-4(\alpha+\lambda)^{2}}{8},\;\;{\rm and}\;\; r(s)=-\la{s^{-\frac{1}{2}}}+\mathcal{O}\left(s^{-1}\right),\;\;s\rw \infty.
\eq
If $\al+\la>0,$
\bq\label{B3}
r'(0)=-\frac{\la}{2(\al+\la)},\;\; s\rw 0.
\eq
The Lax pair of the above third-order nonlinear differential equation is given by the Proposition 2 in \cite{CFC2017}, and the above equation is integrable, see the Remark 2 in \cite{CFC2017}. As pointed out by the
Proposition 4 in \cite{CFC2017}, if
\bq\label{B4}
r'(s)=\frac{y(s)}{2(1-y(s))},
\eq
then $y(s)$ satisfies the following ${\rm P_{\uppercase\expandafter{\romannumeral5}}}$ equation,
\bq\label{B5}
y''(s)=\left(\frac{1}{2y(s)}+\frac{1}{y(s)-1}\right)y'^{2}(s)-\frac{y'(s)}{s}+\frac{(y(s)-1)^{2}}{2s^{2}}\left(\alpha^{2}y(s)-\frac{\lambda^{2}}{y(s)}\right)+\frac{y(s)}{2s},
\eq
where $\al>0,$ $\al+\la+1>0,$ and $y(s)$ satisfies the following boundary conditions,
\bq\label{B6}
y(0)=-\frac{\la}{\al},\;\; {\rm and}\;\; y(s)=-\la{s^{-\frac{1}{2}}}+\mathcal{O}\left(s^{-1}\right),\;\; s\rw \infty.
\eq
\par
The first result is on the asymptotic expansions of Hankel determinants $D_{n}(t; \al, \la)$ are generated from the weight function (\ref{B01}) by sending $n\rw \infty,$ and $t\rw 0,$ such that $s=4nt$ is finite. This expansion in terms of the third-order equation (\ref{B1}), and is uniformly valid for $0<t\leq c,$ where $c$ is a finite positive constant.
\begin{theorem}
Hankel determinants $D_{n}(t; \al, \la)$ associated with the weight function $w(x)$ in (\ref{B01}). Let $n\rw \infty,$ $t \rw 0,$ such that $s=4nt,$ then $D_{n}(t; \al, \la)$ have the following asymptotic expansion,
\bq\label{B7}
D_{n}(t; \al, \la)=D_{n}(0; \al, \la)\exp\left[\int_{0}^{t}\frac{1-4(\al+\la)^{2}-8r(4n\tau)}{16\tau}d\tau\left(1+\mathcal{O}(n^{-\frac{1}{2}})\right)\right],
\eq
where the error term is uniformly valid for $t\in (0, c]$ and $c$ is a finite positive constant, $r(s)$ is analytic for $s\in (0, +\infty)$ and satisfies the third-order equation (\ref{B1}) with the boundary conditions in (\ref{B2}). Moreover, $D_{n}(0; \al, \la)$ have a closed form expression in terms of the Barnes $G-$function, see \cite{M2004},
\bq\label{B8}
D_{n}(0; \al, \la)=\frac{G(n+1)G(n+\al+\la+1)}{G(\al+\la+1)},
\eq
$G(z)$ satisfies a recurrence relation $G(z+1)=\Gamma(z)G(z)$ and the initial condition $G(1)=1,$ see \cite{OLC2010, Vors1987} for a description of $G(z).$
\end{theorem}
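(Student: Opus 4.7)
The plan is threefold: differentiate $\ln D_n(t;\al,\la)$ in $t$, identify the leading-order asymptotics of the derivative in the double-scaling regime using the Riemann-Hilbert analysis of \cite{CFC2017}, and then integrate from $0$ to $t$ after pinning down the initial value $D_n(0;\al,\la)$ via the classical Laguerre norming constants. The boundary condition $r(0)=\frac{1-4(\al+\la)^2}{8}$ in \mref{B2} is precisely what will make the resulting integrand in \mref{B7} finite at $\tau=0$, so integrating against $d\tau/\tau$ is legitimate.

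First, I would differentiate \mref{B03} together with \mref{B05} to obtain
\bbq
\frac{\partial}{\partial t}\ln D_n(t;\al,\la)=\la\sum_{j=0}^{n-1}\frac{1}{h_j(t)}\int_0^\infty\frac{\pi_j^2(x)\,w(x,t)}{x+t}\,dx,
\eeq
and then use ladder-operator identities for the perturbed Laguerre weight (equivalently, the Jimbo-Miwa-Ueno description of the $\tau$-function) to rewrite the right-hand side algebraically in terms of the recurrence coefficients $\al_n(t),\bt_n(t)$ and the auxiliary functions entering the Lax pair constructed in \cite{CFC2017}. Each of those quantities has been tracked uniformly in the double-scaling limit in \cite{CFC2017}, so I would substitute the RH asymptotics directly into this closed-form expression. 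The compatibility condition of the Lax pair, which is equivalent to the third-order equation \mref{B1} for $r$, collapses all the pieces onto the single scalar $r(s)$ with $s=4nt$. Multiplying by $t$, the leading-order identity takes the form
\bbq
t\,\frac{\partial}{\partial t}\ln D_n(t;\al,\la)=\frac{1-4(\al+\la)^2-8r(4nt)}{16}\bigl(1+\mathcal{O}(n^{-1/2})\bigr),
\eeq
uniformly for $t\in(0,c]$. Dividing by $t$ and integrating from $0$ to $t$ yields the exponent in \mref{B7}; the factor $1/\tau$ is harmless because $1-4(\al+\la)^2-8r(s)$ vanishes linearly as $s\rw 0^+$ by \mref{B2} and \mref{B3}.

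For the initial value, I would set $t=0$ in \mref{B01}, collapsing the weight to $x^{\al+\la}e^{-x}$. The monic $L^2$ norms for this classical Laguerre weight are $h_j(0)=j!\,\Gamma(j+\al+\la+1)$, so by \mref{B03}
\bbq
D_n(0;\al,\la)=\prod_{j=0}^{n-1}j!\,\Gamma(j+\al+\la+1),
\eeq
and recognizing the two products through the recursion $G(z+1)=\Gamma(z)G(z)$ with $G(1)=1$ gives \mref{B8}.

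The hard part is the second step: propagating the $\mathcal{O}(n^{-1/2})$ RH-error to the logarithmic $t$-derivative, and keeping the estimate uniform across both the small-$s$ region (where the hard-edge parametrix is active and $r(s)$ is close to its finite boundary value) and the large-$s$ tail (where $s=4nt$ may be of order one for $t$ up to the fixed constant $c$ and $r(s)\sim -\la s^{-1/2}$). Concretely, this requires differentiating the small-norm estimates for the RH error matrix in the external parameter $t$, verifying that the subleading parametrix contributions do not cancel or alter the leading identification, and confirming that the resulting pointwise relative error remains integrable after division by $\tau$ down to $\tau=0^+$. Once this uniform identification is in hand, the remaining manipulations are elementary.
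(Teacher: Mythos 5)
Your proposal is correct and reaches the theorem by the same overall strategy as the paper: identify $H_{n}(t)=t\frac{d}{dt}\log D_{n}(t;\al,\la)$ with $\frac{1-4(\al+\la)^{2}-8r(4nt)}{16}\bigl(1+\mathcal{O}(n^{-1/2})\bigr)$ via the Riemann--Hilbert analysis of \cite{CFC2017}, integrate against $d\tau/\tau$, and evaluate $D_{n}(0;\al,\la)$ from the classical Laguerre norms $h_{j}(0)=j!\,\Gamma(j+\al+\la+1)$ (this last part coincides exactly with the paper). The genuine difference is the intermediate identity. You write $\partial_{t}\ln D_{n}=\sum_{j=0}^{n-1}R_{j}(t)$ and propose to collapse the sum by the finite-$n$ ladder-operator identities of \cite{YangMcky2012} before inserting double-scaling asymptotics; the paper instead proves Lemma~1, expressing the single-index quantity $H_{n}'(t)=n-\al[Y(0)]_{12}[Y(0)]_{21}$ directly through the RH solution at the origin, evaluates $[Y(0)]_{12}[Y(0)]_{21}=\frac{n}{\al}\bigl(1+2r'(s)\bigr)\bigl(1+\mathcal{O}(n^{-1/2})\bigr)$ from the local parametrix $\Phi$ and the matrices $E$, $Q_{1}$, obtains $H_{n}'(t)=-2nr'(4nt)\bigl(1+\mathcal{O}(n^{-1/2})\bigr)$, and integrates twice, the first integration using $H_{n}(0)=0$ together with $r(0)=\frac{1-4(\al+\la)^{2}}{8}$ to fix the additive constant. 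The paper's route has two advantages that your sketch leaves implicit: it never requires asymptotics of $R_{j}$ for small $j$, where the double-scaling parametrix is invalid (your sum must be collapsed at finite $n$ \emph{before} any limit is taken, which is exactly the content of the Chen--McKay identities you invoke but do not carry out), and the constant $\frac{1-4(\al+\la)^{2}}{16}$ in $H_{n}$ emerges from the boundary value $r(0)$ rather than being asserted as part of a closed form for $H_{n}$ itself. On the other hand, your explicit remark that the integrand in \mref{B7} remains bounded as $\tau\to0^{+}$ because $1-4(\al+\la)^{2}-8r(s)=8\bigl(r(0)-r(s)\bigr)=\mathcal{O}(s)$ by \mref{B2} and \mref{B3} (hence under $\al+\la>0$) makes precise a point the paper passes over in silence.
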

\par
For finite $n,$ Chen and McKay have showed that the recurrence coefficient $\al_{n}(t)$ in (\ref{B11}) can be expressed by a solution of a particular Painlev\'e \uppercase\expandafter{\romannumeral5} equation in \cite{YangMcky2012}.
Furthermore, with the aid of the Ladder operator technique, they proved that the logarithmic derivatives of Hankel determinants satisfy a particular Painlev\'e \uppercase\expandafter{\romannumeral5} continuous
Jimbo-Miwa-Okamoto \cite{JM1981, K1981} $\sigma-$form equation. Although the restriction of $\al+\la+1>0$ appears in (\ref{B01}), the conclusions obtained by Chen and McKay in \cite{YangMcky2012} are also valid for this situation. To proceed further, we collect their results in the following theorem.
\begin{theorem}(Chen and McKay \cite{YangMcky2012}). Let $w(x,t)$ be defined in (\ref{B01}), and $R_{n}(t),$ $H_{n}(t)$ be given as follows
\bq\label{B12}
R_{n}(t):=\frac{\la}{h_{n}}\int_{0}^{\infty}\frac{\pi_{n}^{2}(x)w(x,t)}{x+t}dx,\;\;{\rm and}\;\;H_{n}(t):=t\frac{d}{dt}\log{D_{n}(t; \al, \la)}.
\eq
Then, the recurrence coefficients $\al_{n}(t)$ and $\beta_{n}(t)$ can be expressed in terms of $R_{n}(t),$ $H_{n}(t)$ and its derivative as follows
\bq\label{B34}
\al_{n}(t)=2n+1+\al+\la-tR_{n}(t),
\eq
and
\bq\label{B35}
\beta_{n}(t)=n\left(n+\al+\la\right)+tH_{n}'(t)-H_{n}(t).
\eq
\par
Moreover, let
\bq\label{B0001}
y_{n}(t)=\frac{R_{n}(t)}{R_{n}(t)-1},
\eq
then $y_{n}(t)$ satisfies the following ${\rm P_{\uppercase\expandafter{\romannumeral5}}}$ equation,
\bq\label{B13}
y_{n}''=\frac{3y_{n}-1}{2y_{n}(y_{n}-1)}\left(y_{n}'\right)^{2}-\frac{y_{n}'}{t}+\frac{\left(y_{n}-1\right)^{2}}{2t^{2}}\left(\al^{2}y_{n}-\frac{\la^{2}}{y_{n}}\right)+\frac{\left(2n+1+\al+\la\right)y_{n}}{t}-\frac{y_{n}(y_{n}+1)}{2(y_{n}-1)},
\eq
where $X'$ denotes $dX/dt,$with the initial condition $y_{n}(0)=-\frac{\la}{\al}.$
\par
The logarithmic derivatives of Hankel determinants $H_{n}(t)$ satisfy a particular Jimbo-Miwa-Okamoto $\sigma-$form ${\rm P_{\uppercase\expandafter{\romannumeral5}}}$ equation,
\bq\label{B14}
\left(tH_{n}''\right)^{2}=\left[tH_{n}'+(n+\delta_{n})H_{n}'-H_{n}+n\la\right]^{2}-4H_{n}'(H_{n}'+\la)\left(tH_{n}'-H_{n}+n\delta_{n}\right),
\eq
where $\delta_{n}=n+\al+\la,$ $H_{n}(t)$ satisfy the boundary conditions $H_{n}(0)=0.$
\end{theorem}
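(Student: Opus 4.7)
The approach I would take is the ladder operator method for orthogonal polynomials with a semi-classical weight, following Chen--Its and the references used throughout Chen--McKay's work. Writing $v(x,t):=-\log w(x,t)=x-\la\log(x+t)-\al\log x$, I would introduce the auxiliary functions
\[
A_n(x):=\frac{1}{h_n}\int_0^\infty \frac{v'(x)-v'(y)}{x-y}\,\pi_n^2(y)\,w(y,t)\,dy,\quad B_n(x):=\frac{1}{h_{n-1}}\int_0^\infty \frac{v'(x)-v'(y)}{x-y}\,\pi_n(y)\pi_{n-1}(y)\,w(y,t)\,dy,
\]
which are rational in $x$ with simple poles only at $x=0$ and $x=-t$. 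Partial-fraction decomposition produces a residue at $x=-t$ that, up to normalisation, is precisely $R_n(t)$ from (\ref{B12}), together with a companion residue $r_n(t)$ attached to the singularity at $x=0$. The ladder-operator identities then yield the standard supplementary conditions $S_1$, $S_2$, and the sum rule $S_2'$; reading these off residue by residue produces a closed algebraic system linking $\al_n$, $\bt_n$, $R_n$, $r_n$ and $n$, from which the linear relation (\ref{B34}) falls out directly.

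For (\ref{B35}) I would differentiate $\log D_n=\sum_{j=0}^{n-1}\log h_j$ in $t$. Since $\pt w=\la w/(x+t)$ and $\pt \pi_n$ has degree strictly less than $n$, one has $\pt\log h_n=R_n(t)$, so $H_n(t)=t\sum_{j=0}^{n-1}R_j(t)$. Summing the relevant supplementary condition from $j=0$ to $j=n-1$ and combining with this representation of $H_n$ yields (\ref{B35}) by telescoping. Having $\al_n$ and $\bt_n$ in terms of $R_n$, $r_n$, and $H_n$, I would then eliminate the auxiliary $r_n$ from the compatibility system to obtain a single second-order nonlinear ODE for $R_n(t)$; the substitution $y_n=R_n/(R_n-1)$ clears the Riccati-type denominators and identifies the resulting equation as the canonical Painlev\'e~V (\ref{B13}) with parameters $(\al^2/2,\,-\la^2/2,\,2n+1+\al+\la,\,-1/2)$. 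The initial value $y_n(0)=-\la/\al$ follows from the explicit evaluation $R_n(0)=\la/(\al+\la)$ using Laguerre orthogonality at $t=0$.

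Finally, for the $\sigma$-form (\ref{B14}) I would express $\al_n$, $\bt_n$, $R_n$, $r_n$ entirely in terms of $H_n$, $H_n'$, $H_n''$ by combining (\ref{B34}), (\ref{B35}) with the Toda-type $t$-derivative of $\bt_n$, and then substitute these expressions back into one of the compatibility relations to obtain a polynomial identity in $H_n$ and its first two derivatives; this identity is exactly (\ref{B14}), and the boundary value $H_n(0)=0$ is immediate from the analyticity of $D_n(\cdot;\al,\la)$ at $t=0$. The main obstacle is the bookkeeping: relative to the pure Laguerre case, the weight carries two singular factors, so the auxiliary residue $r_n$ must be tracked alongside $R_n$, and the elimination that collapses the coupled system first to a single ODE for $R_n$ and then to the $\sigma$-form for $H_n$ is the most delicate step. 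The constraint $\al+\la+1>0$ enters precisely by guaranteeing convergence of all the integrals defining $A_n$, $B_n$, $R_n$ and $r_n$.
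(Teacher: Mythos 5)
The paper does not actually prove this theorem: its ``proof'' is a list of pointers to equations (221), (230), (244), (69) and (224)--(246) of Chen--McKay \cite{YangMcky2012}, plus the single observation that $R_{n}(0)=\la/(\al+\la)$ follows from classical Laguerre theory. Your ladder-operator outline is precisely the derivation that sits behind those citations, so mathematically you are retracing the source's route rather than inventing a different one. The checkable parts of your sketch are right: with $v'(x)=1-\la/(x+t)-\al/x$ one gets $(v'(x)-v'(y))/(x-y)=\la/((x+t)(y+t))+\al/(xy)$, so $A_{n}$ and $B_{n}$ are sums of simple poles at $x=-t$ and $x=0$ whose residues are $R_{n}$ and its companion; $\partial_{t}w=\la w/(x+t)$ together with monicity gives $\partial_{t}\log h_{n}=R_{n}$, hence $H_{n}=t\sum_{j=0}^{n-1}R_{j}$ and in particular $H_{n}(0)=0$; the ${\rm P_{V}}$ parameters you read off, $(\al^{2}/2,\,-\la^{2}/2,\,2n+1+\al+\la,\,-1/2)$, match (\ref{B13}); and $R_{n}(0)=\la/(\al+\la)$ gives $y_{n}(0)=-\la/\al$. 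Two caveats. First, the heart of the matter --- reading $S_{1}$, $S_{2}$, $S_{2}'$ residue by residue, eliminating the companion residue to reach a closed second-order ODE for $R_{n}$, and the further elimination that produces the $\sigma$-form (\ref{B14}) --- is asserted rather than executed, so as a standalone argument this is a roadmap; it is, however, no less complete than what the paper itself offers for a quoted result. Second, the role you assign to $\al+\la+1>0$ is slightly off: for $t>0$ the integrals defining $A_{n}$, $B_{n}$, $R_{n}$ and the residue at the origin converge because $\al>0$; the condition $\al+\la+1>0$ is what makes the degenerate weight $x^{\al+\la}e^{-x}$ at $t=0$ integrable, i.e.\ it governs the initial data $R_{n}(0)$, $H_{n}(0)$ and $D_{n}(0;\al,\la)$ rather than the $t>0$ integrals.
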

\begin{proof}
Both definitions of $R_{n}(t)$ and $H_{n}(t)$ are given by (221) and (242) in \cite{YangMcky2012}. The above equation (\ref{B34}), (\ref{B35}) and (\ref{B13}) are equations of (230), (244) and (69) in \cite{YangMcky2012}, respectively. A combination of (224), (225), (246) in \cite{YangMcky2012} and (\ref{B0001}), then it follows (\ref{B13}), see also \cite{CBC2016}. Although the boundary conditions of $y_{n}(t)$ and $H_{n}(t)$ are not written down in \cite{YangMcky2012}, one derived $R_{n}(0)=\la/(\al+\la)$ from the classical Laguerre polynomial in \cite{Szego1939} and (\ref{B12}) with $t=0$ therein.
\end{proof}

\par
We consider the uniform asymptotic approximations of the recurrence coefficient $\al_{n}(t)$ and $\beta_{n}(t)$ for the corresponding monic orthogonal polynomials $\pi_{n}(x)$ in (\ref{B11}) and the leading coefficients $\gamma_{n}(t)$ of the orthonormal polynomials $P_{n}(x)=\gamma_{n}(t)\pi_{n}(x).$ From a pair of Painlev\'e equations (\ref{B13}) and (\ref{B14}) for finite $n,$ with the double scaling scheme, $t\rw 0,$ $n\rw \infty,$ such that $s=4nt$ is finite, we obtain two equations associated with the particular ${\rm P_{\uppercase\expandafter{\romannumeral5}}}$ equation in (\ref{B5}). These results are given in the following Theorem.
\begin{theorem}
Let the weight function $w(x,t)$ be defined in (\ref{B01}), then the recurrence coefficients $\al_{n}(t),$ $\beta_{n}(t)$ for the corresponding monic orthogonal polynomials in (\ref{B11}), and the leading coefficient $\gamma_{n}(t)$ of the orthonormal polynomials $P_{n}(x)=\gamma_{n}(t)\pi_{n}(x)$ have the following uniform asymptotic approximations,
\bq\label{B15}
R_{n}(t)=-2r'(s)\left(1+\mathcal{O}(n^{-\frac{1}{2}})\right),
\eq
\bq\label{B16}
H_{n}(t)=\left(-\frac{r(s)}{2}+\frac{1-4(\al+\la)^{2}}{16}\right)\left(1+\mathcal{O}(n^{-\frac{1}{2}})\right),
\eq
\bq\label{B17}
\al_{n}(t)=2n+1+\al+\la+\frac{sr'(s)}{2n}\left(1+\mathcal{O}(n^{-\frac{1}{2}})\right),
\eq
\bq\label{B18}
\beta_{n}(t)=n(n+\al+\la)+\frac{8r(s)+4(\al+\la)^{2}+1}{16}\left(1+\mathcal{O}(n^{-\frac{1}{2}})\right),
\eq
and
\bq\label{B19}
\gamma_{n}(t)=\frac{1}{\sqrt{\Gamma(n+1)\Gamma(n+1+\al+\la)}}\left(1+\frac{8r(s)+4(\al+\la)^{2}+1}{32}\left(1+\mathcal{O}\left(n^{-\frac{1}{2}}\right)\right)\right),
\eq
where $s=4nt$ and the above error terms are uniformly valid for $t\in (0, c],$ $c$ is a finite positive constant. With the aid of the above approximations, as $n\rw \infty,$
then the Painlev\'e equations (\ref{B13}) and (\ref{B14}) give the following Painlev\'e-type equations satisfied by $r(s),$
\bq\label{B20}
8s^2r'(2r'+1)r'''-4s^2(1+4r')r''^2+8sr'(2r'+1)r''-4sr'^2(2r'+1)^2+\lambda^2(2r'+1)^2-4\alpha^2r'^{2}=0,
\eq
and
\bq\label{B21}
8s^2r''^{2}-16sr'^{3}+\left(16r-8s-2\right)r'^{2}+\left(8r+4\al^{2}-4\la^{2}-1\right)r'-2\la^{2}=0,
\eq
where $X'$ represents $dX/ds.$
\end{theorem}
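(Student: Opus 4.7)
The plan is to treat the two parts of the theorem sequentially: first derive the uniform asymptotic approximations (\ref{B15})--(\ref{B19}) by combining the $D_n$ expansion from Theorem 1 with the exact Chen--McKay identities from Theorem 2, and then extract the new Painlev\'e-type equations (\ref{B20}) and (\ref{B21}) by substituting those approximations into the $\mathrm{P_V}$ equation (\ref{B13}) and its Jimbo--Miwa--Okamoto $\sigma$-form (\ref{B14}), and reading off the leading-order balance as $n\to\infty$ with $s=4nt$ fixed.

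For the first part, I would start from (\ref{B7}) and differentiate $\log D_n(t;\alpha,\lambda)$ logarithmically, using $d/dt = 4n\, d/ds$, to recover $H_n(t) = t\,d\log D_n/dt$; the Barnes--$G$ prefactor $D_n(0;\alpha,\lambda)$ is $t$-independent and drops out, and one immediately obtains (\ref{B16}). The coefficient $\beta_n(t)$ then follows from (\ref{B35}): computing $tH_n'(t) - H_n(t)$ by the same chain rule yields (\ref{B18}). For $R_n(t)$, the cleanest route is through the recurrence coefficient $\alpha_n(t)$: the uniform asymptotics of $\alpha_n$ obtained from the Deift--Zhou RH analysis in \cite{CFC2017} plugged into (\ref{B34}) give $R_n(t) = (2n+1+\alpha+\lambda - \alpha_n(t))/t \sim -2r'(s)$, which is (\ref{B15}); re-inserting this into (\ref{B34}) rewrites (\ref{B17}). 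Finally, $\gamma_n^2(t) = 1/h_n(t) = D_n(t)/D_{n+1}(t)$, so combining (\ref{B7}) at indices $n$ and $n+1$ with the closed form (\ref{B8}) and expanding the $n\to n+1$ shift in the integrand yields (\ref{B19}).

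To obtain (\ref{B20}), substitute the asymptotic (\ref{B15}) into $y_n = R_n/(R_n-1)$, producing $y_n(t)\to y(s) := 2r'(s)/(1+2r'(s))$, and rewrite (\ref{B13}) in the variable $s$ via $t\mapsto s/(4n)$ and $d/dt\mapsto 4n\,d/ds$. The leading powers of $n$ on both sides must cancel identically; the first nontrivial balance gives the $\mathrm{P_V}$ equation (\ref{B5}) for $y(s)$, and the transformation (\ref{B4}) converts it into (\ref{B20}), which is just (\ref{B1}). Equation (\ref{B21}) is derived in the same spirit from (\ref{B14}): insert (\ref{B16}) together with $H_n'(t)\sim -2n\,r'(s)$ and $H_n''(t)\sim -8n^2 r''(s)$, multiply out both sides, and verify that the $O(n^4)$ terms cancel pairwise while the surviving $O(n^2)$ balance collapses to (\ref{B21}).

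The main obstacle is the legitimacy of differentiating the error term in (\ref{B7}): Theorem 1 controls only the integrated expression uniformly on $(0,c]$, and a naive termwise differentiation need not preserve the uniform $O(n^{-1/2})$ bound, so one must either establish a direct uniform estimate for $H_n(t)$ from the RH analysis of \cite{CFC2017} or argue that the error arises from smooth parametrices whose $t$-derivatives are again uniformly $O(n^{-1/2})$. A second delicate point is the bookkeeping in the substitution into (\ref{B13}) and (\ref{B14}): because several leading balances in $n$ cancel identically, one has to carry enough subleading information to isolate the genuine differential relation for $r(s)$, while verifying that the $O(n^{-1/2})$ remainders do not pollute the surviving balance that yields (\ref{B20}) and (\ref{B21}).
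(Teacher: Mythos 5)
Your second half --- substituting the double-scaled asymptotics into the finite-$n$ equations (\ref{B13}) and (\ref{B14}) and reading off the surviving balance in $n$ to obtain (\ref{B20}) and (\ref{B21}) --- is exactly what the paper does. The problem is the first half: the concrete routes you propose for (\ref{B15}), (\ref{B16}) and (\ref{B19}) either beg the question or cannot reach the required accuracy, and the one ingredient that makes the theorem provable is absent from your plan. That ingredient is Lemma~1 of the paper: the \emph{exact} identities $R_{n}(t)=1-2\pi i\alpha\,[Y(0)]_{11}[Y(0)]_{12}\gamma_{n}^{2}$, $H_{n}'(t)=n-\alpha[Y(0)]_{12}[Y(0)]_{21}$ and $\frac{d}{dt}\ln\gamma_{n}=-\frac12+i\pi\alpha[Y(0)]_{11}[Y(0)]_{12}\gamma_{n}^{2}$, combined with the evaluation of $Y(0)$ through the local parametrix $P^{(0)}$ at the origin, where the model problem $\Phi(\xi,s)$ and its connection matrix $Q_{1}(s)$ carry $r(s)$ via $A_{1}(s)$ in (\ref{B61})--(\ref{B62}). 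This yields $[Y(0)]_{12}[Y(0)]_{21}\sim\frac{n}{\alpha}(1+2r'(s))$ and hence $H_{n}'(t)=-2nr'(s)(1+\mathcal{O}(n^{-1/2}))$ \emph{directly}; $H_{n}$ is then obtained by integration (legitimate for asymptotics), not by differentiating (\ref{B7}) (not legitimate, as you yourself note --- and indeed (\ref{B7}) is itself \emph{derived} from $H_n'$ in the paper, so your route is circular there too).

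Two specific steps in your plan fail quantitatively. First, extracting $R_{n}(t)=(2n+1+\alpha+\lambda-\alpha_{n}(t))/t$ from an independently obtained asymptotic for $\alpha_{n}$ requires knowing $\alpha_{n}$ to absolute accuracy $o(n^{-1})$ (the term $tR_n=\frac{s}{4n}R_n$ is $\mathcal{O}(n^{-1})$), whereas the large-$z$ expansion of $Y$ with $R(z)=I+\mathcal{O}(n^{-1/2})$ controls $\alpha_n$ far more coarsely; in the paper the logic runs in the opposite direction, from (\ref{B29}) and the $Y(0)$ asymptotics to (\ref{B15}), and only then to (\ref{B17}) via (\ref{B34}). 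Second, your proposal $\gamma_{n}^{2}=D_{n}/D_{n+1}$ with (\ref{B7}) at indices $n$ and $n+1$ must resolve a difference of size $\mathcal{O}(n^{-1})$ between two quantities each carrying an uncontrolled multiplicative error $\mathcal{O}(n^{-1/2})$ on an $\mathcal{O}(1)$ integral, so the error swamps the signal; the paper instead integrates the exact differential identity (\ref{B32}) for $\ln\gamma_{n}$. Until you import Lemma~1 and the origin parametrix computation (equations (\ref{B59})--(\ref{B71}) of the paper), the asymptotics (\ref{B15})--(\ref{B19}) are not established, and the second half of your argument has nothing rigorous to feed on.
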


\begin{rem}
The third order equation is an integrable equation. If one integrates the equation (\ref{B20}) with respect to $s$ on both sides, then it follows (\ref{B21}), see Remark 2 in \cite{CFC2017}. It is also pointed out in \cite{CFC2017}, if one changes the variable as $r'(s)=\frac{y(s)}{2(1-y(s))},$ then the equation (\ref{B20}) turns out to be the ${\rm P_{\uppercase\expandafter{\romannumeral5}}}$ equation in (\ref{B5}).
\par
It is worth note that the third order integrable equation in (\ref{B1}) is the same as the equation in (\ref{B20}), but they are derived from totally different ways. In the present situation the third-order equation in (\ref{B20}) follows from the finite $n$ ${\rm P_{\uppercase\expandafter{\romannumeral5}}}$ equation in (\ref{B13}), and the equation in (\ref{B1}) is a result of the compatibility condition of the corresponding Lax pair in \cite{CFC2017}.
\end{rem}

\par
Under the double scaling scheme, $n\rw \infty,$  $t\rw 0,$ such that $s=4nt,$ the above Theorem 1 and Theorem 3 give the asymptotic approximations in terms of a solution $r(s)$ of the third-order equation associated with the ${\rm P_{\uppercase\expandafter{\romannumeral5}}}$ equation in (\ref{B5}) and its derivative, which have uniform error terms for $t\in (0,c]$ and $n\rw \infty.$ We consider the uniform asymptotic approximations of the recurrence coefficients $\al_{n}(t),$ $\beta_{n}(t)$ and the logarithmic derivatives of the Hankel determinants $H_{n}(t)$ for small $s$ and large $s.$

\begin{cor}
Let $s=4nt,$ $t\in (0,c],$ $c$ is a finite positive constant, the quantities $\al_{n}(t),$ $\beta_{n}(t)$ and $H_{n}(t)$ are given in the Theorem 2, then the uniform asymptotic expansions of $\al_{n}(t),$ $\beta_{n}(t)$ and $H_{n}(t)$ in the following for small $s$ and large $s.$
\par
If $n\rw \infty,$ $t\rw 0^{+},$ such that $s\rw 0^{+}$ and $\al+\la>0,$ then
\bq\label{B24}
H_{n}(t)=\frac{\la{s}}{4(\al+\la)}+\mathcal{O}\left(s^{2}\right)+\mathcal{O}\left(n^{-\frac{1}{2}}\right),
\eq
\bq\label{B23}
\beta_{n}(t)=n(n+\al+\la)+\mathcal{O}(s^{2})+\mathcal{O}\left(n^{-\frac{1}{2}}\right),
\eq
and
\bq\label{B22}
\al_{n}(t)=2n+1+\al+\la-\frac{\la{s}}{4(\al+\la)n}+\mathcal{O}\left(\frac{s^{2}}{n}\right)+\mathcal{O}\left(\frac{1}{n^{\frac{3}{2}}}\right).
\eq
\par
If $n \rw \infty,$ $t\rw c^{-},$ such that $s \rw \infty,$ then
\bq\label{B27}
H_{n}(t)=\frac{1-4(\la+\la)^{2}}{16}\left(1+\mathcal{O}\left(n^{-\frac{1}{2}}\right)\right)+\frac{\la}{2}s^{-\frac{1}{2}}\left(1+\mathcal{O}\left(s^{-\frac{1}{2}}\right)+\mathcal{O}\left(n^{-\frac{1}{2}}\right)\right),
\eq
\bq\label{B25}
\al_{n}(t)=2n+1+\al+\la+\frac{\la}{4n}s^{-\frac{1}{2}}\left(1+\mathcal{O}\left(s^{-\frac{1}{2}}\right)+\mathcal{O}\left(n^{-\frac{1}{2}}\right)\right),
\eq
and
\bq\label{B26}
\beta_{n}(t)=n(n+\al+\la)+\frac{4(\al+\la)^{2}-1}{16}\left(1+\mathcal{O}\left(n^{-\frac{1}{2}}\right)\right)-\frac{3\la}{4}\left(1+\mathcal{O}\left(n^{-\frac{1}{2}}\right)+\mathcal{O}\left(s^{-\frac{3}{2}}\right)\right).
\eq

\end{cor}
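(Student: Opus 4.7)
The plan is to substitute the boundary behaviour of $r(s)$ recorded in \mref{B2}--\mref{B3} into the uniform asymptotic formulas \mref{B16}--\mref{B18} of Theorem 3, and simplify; all three statements in the corollary follow by bookkeeping of leading terms and errors. Since the approximations of Theorem 3 involve a multiplicative error $1+\mathcal{O}(n^{-1/2})$, the task splits cleanly into an ``$n$-error'' part and an ``$s$-expansion'' part.

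For the small-$s$ regime I would write $r(s)=r(0)+r'(0)s+\mathcal{O}(s^{2})$ using $r(0)=\frac{1-4(\alpha+\lambda)^{2}}{8}$ and, under the hypothesis $\alpha+\lambda>0$, $r'(0)=-\frac{\lambda}{2(\alpha+\lambda)}$. Inserted into \mref{B16}, the constant $\frac{1-4(\alpha+\lambda)^{2}}{16}$ is precisely what $-r(0)/2$ cancels, so the leading surviving contribution is $-r'(0)s/2=\frac{\lambda s}{4(\alpha+\lambda)}$, giving \mref{B24}; \mref{B17} and \mref{B18} then produce \mref{B22} and \mref{B23} after collecting the resulting $\mathcal{O}(s^{2})$ remainders together with the $\mathcal{O}(n^{-1/2})$ relative error from Theorem 3.

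For the large-$s$ regime the starting point is $r(s)=-\lambda s^{-1/2}+\mathcal{O}(s^{-1})$. To feed \mref{B17} and \mref{B18} I need the corresponding expansion of $r'(s)$, which I would not simply write down by formal differentiation but rather derive from the first-integrated Painlev\'e-type equation \mref{B21}: inserting an ansatz $r(s)=-\lambda s^{-1/2}+c_{1}s^{-1}+\cdots$ and matching orders yields $r'(s)=\tfrac{\lambda}{2}s^{-3/2}+\mathcal{O}(s^{-2})$, so that $sr'(s)=\tfrac{\lambda}{2}s^{-1/2}(1+\mathcal{O}(s^{-1/2}))$. Substituting these into \mref{B16}--\mref{B18} and distributing the $1+\mathcal{O}(n^{-1/2})$ factors yields \mref{B27}, \mref{B25}, and \mref{B26}, respectively; in each expression the constant piece coming from the limit of $r(s)$ at infinity is to be combined with the $O(n^{-1/2})$ error, and the decaying piece (carrying $s^{-1/2}$ or $s^{-3/2}$) is paired with its own combined $\mathcal{O}(s^{-1/2})+\mathcal{O}(n^{-1/2})$ error.

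The main obstacle I foresee is the rigorous extraction of the asymptotic expansion of $r'(s)$ as $s\to\infty$, since Theorem 3 ties the recurrence coefficient $\alpha_{n}(t)$ directly to $sr'(s)$ and a purely formal term-by-term differentiation of \mref{B2} would be unsatisfactory. The cleanest remedy is to work with the integrated third-order equation \mref{B21}, which is a first-order algebraic relation between $r$, $r'$ and $r''$; substituting the known expansion of $r$ and comparing coefficients propagates the asymptotics to $r'$ and justifies the use of \mref{B17}. Once this is in place, the rest is a routine matter of combining error symbols and simplifying, which I would carry out term by term for each of the six displayed formulas.
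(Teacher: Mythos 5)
Your plan coincides with the paper's own proof, which consists of exactly one sentence: combine the approximations \mref{B15}--\mref{B19} with the boundary data \mref{B2}--\mref{B3}. Your small-$s$ computations for $H_n$ and $\alpha_n$ are correct, and your insistence on extracting the large-$s$ behaviour of $r'(s)$ from the integrated equation \mref{B21} rather than by formal term-by-term differentiation of \mref{B2} is a genuine improvement on what the paper records (the paper is silent on this point).

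However, the step ``\mref{B18} then produces \mref{B23}'' is the one that fails, and it is worth being concrete about why. Substituting $r(0)=\frac{1-4(\alpha+\lambda)^2}{8}$ into the second term of \mref{B18} gives $\frac{8r(0)+4(\alpha+\lambda)^2+1}{16}=\frac{2}{16}=\frac18$, plus a linear term $\frac{8r'(0)s}{16}=-\frac{\lambda s}{4(\alpha+\lambda)}$; neither is absorbed by $\mathcal{O}(s^2)+\mathcal{O}(n^{-1/2})$, so \mref{B23} does not follow from \mref{B18} as printed. The same substitution at large $s$ yields the constant $\frac{4(\alpha+\lambda)^2+1}{16}$ and a decaying term $-\frac{\lambda}{2}s^{-1/2}$, not the $\frac{4(\alpha+\lambda)^2-1}{16}$ and $-\frac{3\lambda}{4}$ appearing in \mref{B26}. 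To land on \mref{B23} and \mref{B26} you must instead route through $\beta_n(t)=n(n+\alpha+\lambda)+tH_n'(t)-H_n(t)$ from \mref{B35}, inserting \mref{B72}--\mref{B73}; this gives
\begin{equation*}
\beta_n(t)=n(n+\alpha+\lambda)+\frac{8r(s)-8sr'(s)+4(\alpha+\lambda)^2-1}{16}\left(1+\mathcal{O}\left(n^{-\frac12}\right)\right),
\end{equation*}
whose numerator vanishes to first order at $s=0$ (since $8r(0)+4(\alpha+\lambda)^2-1=0$ and the $\mathcal{O}(s)$ contributions of $8r$ and $8sr'$ cancel), recovering \mref{B23}, and at large $s$ gives $8r-8sr'\sim-12\lambda s^{-1/2}$, i.e.\ the coefficient $-\frac{3\lambda}{4}$ of \mref{B26}. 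In short, \mref{B18} as printed is inconsistent with \mref{B35}, \mref{B72}, \mref{B73}, and the corollary's $\beta_n$-formulas are the ones consistent with the latter; a proof that substitutes blindly into \mref{B18}, as you propose, produces different constants than the statement claims.
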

\begin{proof}
Combining the asymptotic approximations (\ref{B15})-(\ref{B19}) with the boundary conditions (\ref{B2}) and (\ref{B3}), then it follows (\ref{B24})-(\ref{B26})
\end{proof}
\par
The rest of this paper is organized as follows. In Sect.2, for preparation, we prove that the auxiliary function $R_{n}(t),$ the recurrence coefficients of the monic orthogonal polynomials $\al_{n}(t),$ $\beta_{n}(t)$ and $H_{n}(t)$ can be expressed by a solution of a Riemann-Hilbert problem for the corresponding orthogonal polynomials. In Sect.3, we give the outline and formulas of the Riemann-Hilbert analysis which will benefit for our proof. In Sect.4, we prove the Theorem 1. The proof of the Theorem 3 is presented in Sect.5.

\section{Riemann-Hilbert problem for orthogonal polynomials}
\par
It starts with the following $2\times2$ Riemann-Hilbert problem for $Y(z),$ which associates with the orthogonal polynomials with respect to the weight function $w(x,t)$ in (\ref{B01}).
\par
$(a)$ $Y(z)$ is analytic for $z\in \C \setminus [0,\infty).$
\par
$(b)$ $Y(z)$ satisfies the jump condition
\begin{equation*}
Y_{+}(x)=Y_{-}(x)\left(
\begin{matrix}
1&w(x,t)\\
0&1\\
\end{matrix}
\right),\quad {\rm for}\quad x\in (0,+\infty),
\end{equation*}
where $w(x, t)$ is in (\ref{B01}).
\par
$(c)$ The asymptotic behavior of $Y(z)$ at infinity is
\begin{equation*}
Y(z)=\left(I+\mathcal{O}\left(\frac{1}{z}\right)\right)\left(
\begin{matrix}
z^{n}&0\\
0&z^{-n}\\
\end{matrix}
\right),\quad z\rightarrow\infty.
\end{equation*}
\par
$(d)$ The asymptotic behavior of $Y(z)$ at the origin is
\begin{equation*}
Y(z)=\left(
\begin{matrix}
O(1)&O(1)\\
O(1)&O(1)\\
\end{matrix}
\right)
,\quad z\rightarrow 0.
\end{equation*}
\par
With the general result of Fokas, Its and Kitaev \cite{FIK1992}, in this case, the unique solution of a $2\times2$ matrix valued function $Y(z)$ can be expressed by the corresponding monic orthogonal polynomials as follows
\begin{equation}\label{B28}
Y(z)=\left(
\begin{matrix}
\pi_{n}(z)&\frac{1}{2\pi{i}}\int_{0}^{\infty}\frac{\pi_{n}(x)w(x, t)}{x-z}dx\\
-2\pi{i}\gamma_{n-1}^{2}\pi_{n-1}(z)&-\gamma_{n-1}^{2}\int_{0}^{\infty}\frac{\pi_{n-1}(x)w(x, t)}{x-z}dx\\
\end{matrix}
\right),
\end{equation}
where $\gamma_{n}$ dependents on $t,$ and it is the leading coefficient of the orthogonal polynomial $P_{n}(z)=\gamma_{n}(t)\pi_{n}(z)$ associated with the weight function $w(x,t)$ in (\ref{B01}), and $\pi_{n}(z)$ is the corresponding monic orthogonal polynomials.
\par
For later convenience, we prove that the auxiliary function $R_{n}(t)$ and $H_{n}(t)$ can be expressed by the elements of the above $2\times2-$matrix-valued function $Y(z).$ We list these relation in the following Lemma.

\begin{lemma}
For finite $n,$ the auxiliary function $R_{n}(t)$ in (\ref{B12}), the recurrence coefficient $\al_{n}(t)$ and $H_{n}(t)$ have the following expressions,
\bq\label{B29}
R_{n}(t)=1-i2\pi\al{[Y(0)]_{11}[Y(0)]_{12}}\left(\gamma_{n}(t)\right)^{2},
\eq
\bq\label{B30}
\al_{n}(t)=2n+1+\al+\la-t+i2\pi\al{t}[Y(0)]_{11}[Y(0)]_{12},
\eq
and
\bq\label{B31}
H_{n}'(t)=n-\al{[Y(0)]_{12}[Y(0)]_{21}},
\eq
where $[Y(z)]_{k\ell}$ is the $(k,\ell)$ entry of the matrix function $Y(z)$ in (\ref{B28}).
\par
Moreover, the leading coefficient $\gamma_{n}(t)$ satisfies the equation as follows
\bq\label{B32}
\frac{d}{dt}\ln{\gamma_{n}(t)}=-\frac{1}{2}+i\pi\al{[Y(0)]_{11}[Y(0)]_{12}}\left(\gamma_{n}(t)\right)^{2}.
\eq
\end{lemma}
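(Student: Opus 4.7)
The plan is to read off the four entries of $Y(0)$ from the explicit formula (\ref{B28}) -- namely $[Y(0)]_{11}=\pi_n(0)$, $[Y(0)]_{12}=\frac{1}{2\pi i}\int_0^\infty \pi_n(x)w(x,t)/x\,dx$, and $[Y(0)]_{21}=-2\pi i\,\gamma_{n-1}^2\,\pi_{n-1}(0)$ -- and then to verify each of (\ref{B29})--(\ref{B32}) by reducing the claim to a finite identity among integrals of $\pi_n$ and $\pi_{n-1}$ against $w(x,t)$, using orthogonality, integration by parts, and a Christoffel--Darboux / Casoratian identity at $y=0$.

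For (\ref{B29}) I would start from the defining integral $\lambda\int_0^\infty\pi_n^2(x)(x+t)^{\lambda-1}x^\alpha e^{-x}\,dx$ and integrate by parts using $\lambda(x+t)^{\lambda-1}=\partial_x(x+t)^\lambda$. Since $\alpha>0$ kills the boundary at $x=0$ and $e^{-x}$ controls infinity, no boundary terms survive. Expanding $\partial_x[\pi_n^2(x)x^\alpha e^{-x}]$ produces three pieces: $\int 2\pi_n\pi_n' w\,dx$ vanishes by orthogonality (since $\pi_n'$ has degree $n-1<n$), $-\int\pi_n^2 w\,dx=-h_n$, and the $\alpha/x$ piece reduces via the split $\pi_n(x)^2/x=\pi_n(x)[\pi_n(x)-\pi_n(0)]/x+\pi_n(0)\pi_n(x)/x$ -- the first summand being $\pi_n$ against a polynomial of degree $n-1$ -- to $\alpha\pi_n(0)\int \pi_n(x)w(x,t)/x\,dx$. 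Rearranging and dividing by $h_n$ produces (\ref{B29}). Then (\ref{B30}) follows by substituting (\ref{B29}) into (\ref{B34}); and (\ref{B32}) follows from $\gamma_n(t)^2=1/h_n(t)$ together with the Toda-type relation $h_n'(t)=R_n(t)h_n(t)$ (itself a consequence of $\partial_t w=\lambda w/(x+t)$ and orthogonality), which gives $(\log \gamma_n)'=-R_n/2$, into which (\ref{B29}) is substituted.

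The main obstacle is (\ref{B31}). Starting from $H_n(t)=t(\log D_n(t))'=t\sum_{j=0}^{n-1}R_j(t)$ -- which comes from $D_n=\prod_{j=0}^{n-1}h_j$ and $h_j'=R_jh_j$ -- I would differentiate to get $H_n'(t)=\sum_{j=0}^{n-1}R_j(t)+t\sum_{j=0}^{n-1}R_j'(t)$, and must show this collapses to $n-\alpha[Y(0)]_{12}[Y(0)]_{21}=n+\alpha\pi_{n-1}(0)p_n(t)/h_{n-1}(t)$, where $p_n(t):=\int_0^\infty \pi_n(x)w(x,t)/x\,dx$. The key ingredient is the Casoratian identity $p_n\pi_{n-1}(0)-p_{n-1}\pi_n(0)=h_{n-1}$: both sequences $\{p_n\}$ and $\{\pi_n(0)\}$ satisfy the same three-term recurrence obtained from (\ref{B11}) by dividing by $x$ and integrating against $w$, so their Casorati determinant propagates by $\beta_n=h_n/h_{n-1}$ with seed $h_0$. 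Combined with the $t$-flow of the Lax pair of \cite{CFC2017}, whose connection matrix $\partial_t Y(z,t)\cdot Y(z,t)^{-1}$ is rational in $z$ with a simple pole at $z=-t$, the sum $t\sum R_j'$ collapses by telescoping to the single off-diagonal residue at $z=0$. The hard part is to implement this telescoping cleanly in closed form; a shortcut is to quote the corresponding ladder-operator identity of Chen--McKay \cite{YangMcky2012} for $H_n'(t)$ and translate it via (\ref{B28}).
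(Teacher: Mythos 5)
Your derivations of (\ref{B29}), (\ref{B30}) and (\ref{B32}) are correct and coincide with the paper's: the same integration by parts against $\partial_x(x+t)^{\la}$, the same splitting $\pi_n^2(x)/x=\pi_n(x)\left[\pi_n(x)-\pi_n(0)\right]/x+\pi_n(0)\pi_n(x)/x$ with the first piece killed by orthogonality, substitution into (\ref{B34}) for (\ref{B30}), and the relation $h_n'=R_nh_n$ (equivalently, differentiating (\ref{B05}) in $t$) for (\ref{B32}). The identification $[Y(0)]_{11}=\pi_n(0)$, $[Y(0)]_{12}=\tfrac{1}{2\pi i}\int_0^\infty\pi_n(x)w(x,t)x^{-1}dx$, $[Y(0)]_{21}=-2\pi i\gamma_{n-1}^2\pi_{n-1}(0)$ is exactly how the paper converts these integral identities into statements about $Y(0)$.

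The weak point is (\ref{B31}). Your primary route -- writing $H_n(t)=t\sum_{j=0}^{n-1}R_j(t)$, differentiating, and collapsing $\sum_j R_j+t\sum_j R_j'$ via a Casoratian identity and the $t$-part of the Lax pair -- is not carried out, and you acknowledge the telescoping is the hard part; as stated it is a plan, not a proof, and it is considerably more work than needed. The paper does exactly what you list as the ``shortcut'': it quotes the ladder-operator identities (222) and (245) of Chen--McKay, which already give $H_n'(t)=n+\al\gamma_{n-1}^2(t)\int_0^\infty\pi_n(x)\pi_{n-1}(x)w(x,t)x^{-1}dx$ after the same integration by parts, and then applies the split $\pi_n(x)\pi_{n-1}(x)/x=\pi_n(x)\left[\pi_{n-1}(x)-\pi_{n-1}(0)\right]/x+\pi_{n-1}(0)\pi_n(x)/x$ (first piece of degree $\le n-1$ in the cofactor, hence orthogonal to $\pi_n$) to land on $H_n'(t)=n+\al\gamma_{n-1}^2(t)\pi_{n-1}(0)\int_0^\infty\pi_n(x)w(x,t)x^{-1}dx=n-\al[Y(0)]_{12}[Y(0)]_{21}$. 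So if you commit to the citation route, your argument closes and matches the paper; if you insist on the telescoping route, there is a genuine unfilled gap, since you never produce the closed-form collapse of $t\sum_j R_j'$.
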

\begin{proof}
By the definition $R_{n}(t)$ in (\ref{B12}), see also (221) in Chen and McKay \cite{YangMcky2012}, applying integration by parts, one finds,
\begin{align}\label{B33}
R_{n}(t)&=1-\al\left(\gamma_{n}(t)\right)^{2}\int_{0}^{\infty}\frac{\left(\pi_{n}(x)\right)^{2}w(x,t)}{x}dx \nonumber \\
&=1-\al\left(\gamma_{n}(t)\right)^{2}\int_{0}^{\infty}\left(Q_{n-1}(x)+\frac{\pi_{n}(0)}{x}\right)\pi_{n}(x)w(x,t)dx \nonumber\\
&=1-\al\left(\gamma_{n}(t)\right)^{2}\pi_{n}(0)\int_{0}^{\infty}\frac{\pi_{n}(x)w(x,t)}{x}dx,
\end{align}
where $\gamma_{n}(t)$ is the leading coefficient, $Q_{n-1}(x)$ is a combination of the monic orthogonal polynomials $\pi_{k}(x),$ $0\leq k\leq n-1.$ With the $[Y(z)]_{k\ell}$ is the $(k,\ell)$ entry in (\ref{B28}), then (\ref{B33}) follows (\ref{B29}). Obviously, both the equations of (\ref{B34}) and (\ref{B29}) give the equation (\ref{B30}).
\par
Applying the equations (222) and (245) in \cite{YangMcky2012}, and integration by parts, one obtains,
\begin{align}\label{B36}
H_{n}'(t)&=n+\al\left(\gamma_{n-1}(t)\right)^{2}\int_{0}^{\infty}\frac{\pi_{n}(x)\pi_{n-1}(x)w(x,t)}{x}dx\nonumber\\
&=n+\al\left(\gamma_{n-1}(t)\right)^{2}\pi_{n-1}(0)\int_{0}^{\infty}\frac{\pi_{n}(x)w(x,t)}{x}dx,
\end{align}
with the entries of the matrix function $Y(z)$ in (\ref{B28}), (\ref{B36}) gives (\ref{B31}).
\par
Taking the derivative of both sides of (\ref{B05}) with respect to $t,$ and the fact $h_{n}(t)=\left(\gamma_{n}(t)\right)^{-2},$ it implies,
\bq\label{B37}
\frac{\gamma_{n}'(t)}{\gamma_{n}(t)}=-\frac{1}{2}+\frac{\al}{2}\left(\gamma_{n}(t)\right)^{2}\int_{0}^{\infty}\frac{\left(\pi_{n}(x)\right)^{2}w(x,t)}{x}dx,
\eq
by the similar technique in (\ref{B33}) and the entries of $Y(z)$ in (\ref{B28}), it follows the equation (\ref{B32}).
\end{proof}

\section{Deift-Zhou steepest descent analysis}

Following the standard process of the nonlinear steepest descent analysis developed in \cite{DKV1999, DX1999}, we start with the RH problem for $Y$ and obtain the RH problem for $R$ with jump matrices tends to the identity matrix, by a series of transformations $Y\rightarrow T\rightarrow S\rightarrow R.$ Taking a list of inverse transformations, one can derives the uniform asymptotic approximation of the orthogonal polynomials in the complex plane. We are interested in the construction of the local parametrix in the neighborhood of the point $x=0$ of the equilibrium density, in which a particular ${\rm P_{\uppercase\expandafter{\romannumeral5}}}$ equation is involved. This ${\rm P_{\uppercase\expandafter{\romannumeral5}}}$ equation is equivalent to the general ${\rm P_{\uppercase\expandafter{\romannumeral3}}}$ equation \cite{GLS2002}, see also \cite{CBC2016} for the concrete model of the single-user Multi-Input-Multi-Output (MIMO) wireless communication systems.
\par
In this section, we review the main steps and formulas to derive the uniformly asymptotic approximations of the Hankel determinants and the recurrence coefficients $\al_{n}(t)$ and $\beta_{n}(t)$ of the corresponding monic orthogonal polynomials.
\\
{\bf Utilization and the first transformation: $Y\rightarrow T.$}
\par
To normalize the matrix function $Y(z)$ at infinity, the first transformation is given by
\bq\label{B37}
T(z)=(4n)^{-(n+\frac{\alpha+\lambda}{2})\sigma_{3}}e^{-\frac{n\ell}{2}\sigma_{3}}Y(4nz)e^{-n(g(z)-\frac{\ell}{2})\sigma_{3}}
\left(4n\right)^{\frac{\alpha+\lambda}{2}\sigma_{3}},\; \; \; z\in \mathbb{C}\setminus [0,+\infty),
\eq
where $g(z)$ is defined in terms of the equilibrium density $\mu(x)=\frac{2}{\pi}\sqrt{\frac{1-x}{x}},$ $x\in (0,1)$ as follows
\bq\label{B38}
g(z)=\int_{0}^{1}\log(z-x)\mu(x)dx,
\eq
and $\ell=-2(1+2\ln{4})$ is the Euler-Laguerre constant.
\par
Actually, by the above transformation, $T(z)$ is utilized such that $T(z)=I+ \mathcal{O}\left(1/z\right),$ as $z\rightarrow \infty.$
\par
For statements convenience, we claim that the following Pauli matrices$\sigma_{k},$ $k=1, 2, 3$ have been used in this paper,
\bbq
\sigma_{1}=\left(
\begin{matrix}
0&1\\
1&0\\
\end{matrix}
\right),
\;\;
\sigma_{2}=\left(
\begin{matrix}
0&-i\\
i&0\\
\end{matrix}
\right)
\;\;
\sigma_{3}=\left(
\begin{matrix}
1&0\\
0&-1\\
\end{matrix}
\right).
\eeq
\begin{center}
\begin{tikzpicture}
\begin{scope}[line width=2pt]
\draw[->,>=stealth] (0,0)--(3,0);

\draw[-] (0,0)--(10,0);
\draw[->,>=stealth] (6,0)--(8,0);
\draw[->,>=stealth] (2.9,1.87)--(3.1,1.87);
\draw[->,>=stealth] (2.9,-1.87)--(3.1,-1.87);
\draw[-] (3,0)--(5,0);
\draw (0,0) .. controls (2.,2.5) and (4.0,2.5) .. (6,0);
\draw (0,0) .. controls (2.,-2.5) and (4.0,-2.5) .. (6,0);
\node[below] at (0,0) {$O$};
%\node[above] at (-0.5,0) {$\Sigma_{0}$};
\node[above] at (1.6,1.6) {$\Sigma_{3}$};
\node[above] at (1.8,0) {$\Sigma_{2}$};
\node[above] at (1.8,-1.5) {$\Sigma_{1}$};
%\node[above] at (4.0,0) {$\Sigma_{4}$};
\node[below] at (6,0) {$1$};
%\node[below] at (-1,0) {$-s$};
\node[below] at (3.5,-2.9) { Figure 1. The contour $\Sigma_{S}=\{\bigcup_{k=1}^{3}\Sigma_{k}\}\bigcup(1,\infty)$ for $S(z).$};
\end{scope}
\end{tikzpicture}
\end{center}

{\bf Opening of the lens on $(0,1)$}
\par
From the expression of $g(z)$ in (\ref{B38}), it is not analytic on the equilibrium support $(0, 1).$ With the equilibrium density technique in \cite{DeiftBook} and the properties of $g(z),$ it is need to introduce the following function
\bq\label{BB39}
\phi(z)=2\int_{0}^{z}\sqrt{\frac{s-1}{s}}ds,\quad z\in \C \setminus [0,+\infty),
\eq
where $\arg{z}\in (0,2\pi),$ $\phi(z)$ satisfies that $\phi(z)$ is purely imaginary for $x\in (0,1);$ $\phi(x)>0$ for $x>1;$ $\Re\phi(z)<0$ for $\Im{z}\neq 0$ and $0<\Re{z}<1.$
\par
After some calculations, see also \cite{CFC2017}, $T(z)$ satisfies the following jump condition
\bq\label{B39}
T_{+}(x)=T_{-}(x)\left\{
\begin{array}{ll}
\left(
\begin{matrix}
e^{2n\phi_{+}(x)}&x^{\alpha}\left(x+\frac{t}{4n}\right)^{\lambda}\\
0&e^{2n\phi_{-}(x)}\\
\end{matrix}
\right), & x\in (0,1),
\\
\left(
\begin{matrix}
1&x^{\alpha}\left(x+\frac{t}{4n}\right)^{\lambda}e^{-2n\phi(x)}\\
0&1\\
\end{matrix}
\right),& x\in \left(1,+\infty\right).
\end{array}
\right.
\eq
\par
The jump matrix of $T$ on the interval $(0,\infty)$ tends to the identity matrix very quickly as $n\rightarrow \infty,$ but the situation on the interval $(0,1)$ is more complicated due to the oscillation entries therein. By the factorization of the jump matrix, the oscillation entries are translated to the off-diagonal terms in the jump matrix and the interval $(0,1)$ is deformed as an opening lens, this is the key step of the Deift-Zhou nonlinear steep descent method.
\par
The second transformation is introduced as follows
\bq\label{B40}
S(z)=\left\{
\begin{array}{lll}
T(z),&{\rm for}\; z\; {\rm outside}\; {\rm the}\; {\rm lens}\; {\rm shaped}\; {\rm region},\\
\\
T(z)\left(
\begin{matrix}
1&0\\
-z^{-\alpha}\left(z+\frac{t}{4n}\right)^{-\lambda}e^{2n\phi(z)}&1\\
\end{matrix}
\right),&{\rm for}\; z\; {\rm in}\; {\rm the}\; {\rm upper}\; {\rm lens}\; {\rm region},\\
\\
T(z)\left(
\begin{matrix}
1&0\\
z^{-\alpha}\left(z+\frac{t}{4n}\right)^{-\lambda}e^{2n\phi(z)}&1\\
\end{matrix}
\right),& {\rm for}\; z\; {\rm in}\; {\rm the}\; {\rm lower}\; {\rm lens}\; {\rm region},
\end{array}
\right.
\eq
where $\arg{z}\in (-\pi, \pi).$ Obviously, $S(z)$ keeps the utilization, namely, $S(z)=I+\mathcal{O}\left(1/z\right)$ as $z\rightarrow \infty,$ and $S(z)$ satisfies $S_{+}(z)=S_{-}(z)J_{S}$ and $J_{S}$ is given as follows
\bq\label{B41}
J_{S}(z)=\left\{
\begin{array}{lll}
\left(
\begin{matrix}
1&0\\
z^{-\alpha}\left(z+\frac{t}{4n}\right)^{-\lambda}e^{2n\phi(z)}&1\\
\end{matrix}
\right),&{\rm for}\; z \in \Sigma_{1}\bigcup\Sigma_{3},\\
\\
\left(
\begin{matrix}
0&x^{\alpha}\left(x+\frac{t}{4n}\right)^{\lambda}\\
-x^{-\alpha}\left(x+\frac{t}{4n}\right)^{-\lambda}&0\\
\end{matrix}
\right),& {\rm for}\;z=x\in \Sigma_{2}=(0,1),\\
\\
\left(
\begin{matrix}
1&x^{\alpha}\left(x+\frac{t}{4n}\right)^{\lambda}e^{2n\phi(x)}\\
0&1\\
\end{matrix}
\right),& {\rm for}\; z=x\in (1,+\infty).
\end{array}
\right.
\eq
By the second transformation, there is no oscillation term in the diagonal of the jump matrix of $S(z).$

\subsection{Global parametrix}

From the jump matrix $J_{S}$ in (\ref{B41}), it tends to the identity matrix for $z\in \Sigma_{1}\bigcup\Sigma_{3}\bigcup(1,+\infty)$ as $n\rightarrow \infty,$ and $\left(x+t/4n\right)^{\pm\lambda}=x^{\pm\lambda}+\mathcal{O}(n^{-1})$ as $n\rightarrow \infty,$ are valid for $x\in (0, 1)$ and $t\in (0,c],$ where $c$ is a finite and positive constant. Then, it turns out to be seeking a solution of the RH problem for $P^{(\infty)}(z)$ as follows
\par
$(a)$ $P^{(\infty)}(z)$ is analytic in $\C \setminus [0,1].$
\par
$(b)$ $P^{(\infty)}(z)$ satisfies the jump condition,
\bq\label{B42}
P^{(\infty)}_{+}(x)=P^{(\infty)}_{-}(x)
\left(
\begin{matrix}
0&x^{\al+\la}\\
-x^{-(\al+\la)}&0\\
\end{matrix}
\right),\;\; {\rm for}\;\;x \in \Sigma_{2}=(0,1).\\
\eq
\par
$(c)$ The asymptotic behavior at infinity is
\bbq
P^{(\infty)}(z)=I+\mathcal{O}\left(z^{-1}\right),\;\; z\rightarrow \infty.
\eeq
\par
A solution of the above RH problem, $P^{(\infty)}(z)$ can be constructed as follows
\bq\label{B43}
P^{(\infty)}(z)=D(\infty)^{\sigma_{3}}M^{-1}a^{-\sigma_{3}}(z)MD^{\sigma_{3}}(z), \;\; {\rm for}\;\; z\in \mathbb{C}\setminus [0,1],
\eq
where $M=(I+i\sigma_{1})/\sqrt{2},$ $a(z)=((z-1)/z)^{\frac{1}{4}},$ $D(\infty)=2^{-(\al+\la)},$ and $D(z)=(\sqrt{z}/(\sqrt{z}+\sqrt{z-1}))^{-(\al+\la)}$ is the Szeg\"{o} function, such that $D_{+}(x)D_{-}(x)=x^{\al+\la}$ for $x\in (0,1),$ moreover, the branches are taken as $\arg{z}\in (-\pi, \pi)$ and $\arg(z-1)\in (-\pi, \pi).$
\par
Then,
\bq\label{B001}
S(z){P^{(\infty)}}^{-1}(z)=I+\mathcal{O}\left(n^{-1}\right),\;\;n\rw\infty,
\eq
where the error term is uniform for $z$ away from the end points $0,$ $1$ and $t\in(0,c],$ $c$ is a positive finite constant.
Due to the factor $a(z)$ in $(\ref{B43})$ with fourth root singularities at $z=0$ and $1,$ so it is need to construct the local parametrices in the neighborhoods of these points.
\par
Let
\bq\label{B44}
N(z)=P^{(\infty)}(z)\left(\frac{z+\frac{t}{4n}}{z}\right)^{-\frac{\la}{2}\sigma_{3}},
\eq
where $\arg z \in (-\pi, \pi),$ $\arg(z+t/4n) \in (-\pi,\pi),$ and $P^{(\infty)}(z)$ is given by (\ref{B43}). Then, $N(z)$ satisfies the RH problem for $P^{(\infty)}(z)$ and the jump condition $(\ref{B42})$ replaced by
\bq\label{B45}
N_{+}(x)=N_{-}(x)\left\{
\begin{array}{lll}
e^{i\la\pi\sigma_{3}},& \; x\in (-\frac{t}{4n}, 0),\\
\\
\left(
\begin{matrix}
0&x^{\alpha}\left(x+\frac{t}{4n}\right)^{\lambda}\\
-x^{-\alpha}\left(x+\frac{t}{4n}\right)^{-\lambda}&0\\
\end{matrix}
\right),& \; x\in (0,1).\\
\end{array}
\right.
\eq
Note that, $N(z)$ has the same jump condition with $S(z)$ for fixed $z=x\in (0,1).$

\subsection{Local parametrix $P^{(1)}(z)$ at $z=1$}

For $z\in U(1,r),$ $U(1,r)=\{z:|z-1|<r\},$ and $r$ is sufficiently small and positive, then the local parametric $P^{(1)}(z)$ can be expressed by the Airy function and its derivatives. Especially, for $z \in \partial U(1,r),$ $P^{(1)}(z)$ satisfies the matching condition as follows
\bq\label{B002}
P^{(1)}(z)P^{(\infty)-1}(z)=I+\mathcal{O}\left(n^{-1}\right),\; {\rm as}\;\; n\rw \infty.
\eq
More concrete details showed in \cite{CFC2017}.

\subsection{Local parametrix $P^{(0)}(z)$ at $z=0$}

For $z \in U(0,r)=\{z\in \mathbb{C}: |z|<r\}$ for small $r,$ and $r>t/4n,$ the parametrix $P^{(0)}(z)$ satisfies the following RH problem,
\par
$(a)$ $P^{(0)}(z)$ is analytic in $U(0,r)\setminus \{\bigcup_{k=1}^{3}\Sigma_{k}\}.$
\par
$(b)$ $P^{(0)}(z)$ has the same jump conditions with $S(z)$ on $U(0,r)\bigcap \Sigma_{k},k=1,2,3,$ see (\ref{B41}).
\par
$(c)$ For $z\in \partial U(0,r)=\{z\in \mathbb{C}:|z|=r\},$ $P^{(0)}(z)$ satisfies the following matching condition,
\bq\label{B46}
P^{(0)}(z){P^{(\infty)}}^{-1}(z)=I+\mathcal{O}\left(n^{-\frac{1}{2}}\right), \;\; {\rm as}\;\; n\rightarrow \infty.
\eq
\par
$(d)$ The asymptotic behavior of $P^{(0)}(z)$ at $z=0$ is the same as $S(z).$
\par

In this situation, the local parametrix $P^{(0)}(z)$ is related to a model RH problem for $\Phi(\xi,s).$ From the Lax pair of $\Phi(\xi,s),$ we obtain a third-order non linear differential equation which can degenerate to a particular ${\rm P_{\uppercase\expandafter{\romannumeral5}}}$ equation, the more information see \cite{CFC2017}. The local parametric $P^{(0)}(z)$ can be expressed in terms of $\Phi(\xi,s)$ as follows
\bq\label{B47}
P^{(0)}(z)=E(z)\Phi(n^{2}\phi^{2}(z), 4nt)e^{-\frac{i\pi}{2}\sigma_{3}}(-z)^{-\frac{\alpha+\lambda}{2}\sigma_{3}}\left(\frac{-\left(z+\frac{t}{4n}\right)}{-z}\right)^{-\frac{\lambda}{2}\sigma_{3}}e^{n\phi(z)\sigma_{3}},
\eq
where $z\in U(0,r)\setminus \Sigma_{S},$ and $E(z)$ is given by
\bq\label{B48}
E(z)=N(z)e^{\frac{i\pi}{2}\sigma_{3}}\left(-z\right)^{\frac{\al+\la}{2}\sigma_{3}}\left(\frac{-\left(z+\frac{t}{4n}\right)}{-z}\right)^{\frac{\la}{2}\sigma_{3}}\frac{I-i\sigma_{1}}{\sqrt{2}}\left(n^{2}\phi^{2}(z)\right)^{\frac{1}{4}\sigma_{3}},
\eq
$N(z)$ is in (\ref{B44}), the branches are chosen such that $\arg(-z)\in (-\pi, \pi),$ $\arg\{n^{2}\phi^{2}(z)\}\in (-\pi,\pi),$ $\arg\{-(z+t/4n)\}\in (-\pi, \pi),$ and $E(z)$ is analytic in $U(0,r).$
\par
\begin{center}
\begin{tikzpicture}
\begin{scope}[line width=2pt]
\draw[->,>=stealth] (0,0)--(1.5,0);
\draw[-] (0,0)--(3,0);
\node[below] at (3,0) {$S$};

\draw[->,>=stealth] (-3.5,0)--(-2,0);
\draw[-] (-4.6,0)--(0,0);
\draw[dashed] (3,0)--(6,0);
\draw[->,>=stealth] (-3.2,2.8)--(-1.6,1.4);
\draw[-] (-2.4,2.1)--(0,0);
\draw[->,>=stealth] (-3.2,-2.8)--(-1.6,-1.4);
\draw[-] (-2.4,-2.1)--(0,0);
\node[below] at (0,0) {$O$};
\node[below] at (-.8,-3.5) {Figure 2. Contours $\mathbb{C}\setminus \displaystyle\cup_{j=1}^{3}\widehat{\Sigma}_{j}\cup\left(0,s\right),$ and regions $\widehat{\Omega}_{j}, j=1,\ldots,4.$ };

\node[above] at (-1.4,-2.1) {{$\widehat{\Sigma}_{3}$}};
\node[above] at (-2,0) {{ $\widehat{\Sigma}_{2}$}};
\node[above] at (-1.4,1.4) {{ $\widehat{\Sigma}_{1}$}};

\node[above] at (1,.5) {{$\widehat{\Omega}_{1}$}};
\node[above] at (-3.1,.8) {{ $\widehat{\Omega}_{2}$}};
\node[above] at (-3.1,-1.3) {{$\widehat{\Omega}_{3}$}};
\node[above] at (1,-1.3) {{$\widehat{\Omega}_{4}$}};
\end{scope}
\end{tikzpicture}
\end{center}

\par
A model RH problem for $\Phi(\xi, s)$ is given as follows
\par
$(a)$ $\Phi(\xi,s)$ is analytic in $\mathbb{C}\setminus \displaystyle\cup_{j=1}^{3}\widehat{\Sigma}_{j}\cup\left(0,s\right),$ see Figure 2.
\par
$(b)$ $\Phi(\xi,s)$ satisfies the following jump condition
\bq\label{B49}
\Phi_{+}(\xi,s)=\Phi_{-}(\xi,s)\left\{
\begin{array}{llll}
e^{{i}\lambda\pi{\sigma_{3}}}, & \xi\in \left(0,s\right), \\
\\
\left(
\begin{matrix}
1&0\\
e^{i\pi(\lambda+\alpha)}&1\\
\end{matrix}
\right),& \xi\in \widehat{\Sigma}_{1},\\
\\
\left(
\begin{matrix}
0&1\\
-1&0\\
\end{matrix}
\right),& z \in \widehat{\Sigma}_{2},\\
\\
\left(
\begin{matrix}
1&0\\
e^{-i\pi(\lambda+\alpha)}&1\\
\end{matrix}
\right),& \xi \in\widehat{\Sigma}_{3}.
\end{array}
\right.
\eq
\par
$(c)$ As $\xi\rightarrow \infty,$ the asymptotic behavior of $\Phi(\xi,s)$ is described by
\bq\label{B50}
\Phi(\xi,s)=\left(I+\frac{C_{1}(s)}{\xi}+\mathcal{O}\left(\frac{1}{\xi^{2}}\right)\right)\xi^{-\frac{1}{4}\sigma_{3}}\frac{I+i\sigma_{1}}{\sqrt{2}}e^{\sqrt{\xi}\sigma_{3}},
\eq
where the matrix function $C_{1}(s)$ is independent of $\xi$ and $\arg{\xi} \in (-\pi, \pi).$
\par
$(d)$ As $\xi\rightarrow 0,$ the asymptotic behavior of $\Phi(\xi,s)$ is
\bq\label{B51}
\Phi(\xi,s)=Q_{1}(s)\left(I+\mathcal{O}(\xi)\right)\xi^{\frac{\alpha}{2}\sigma_{3}}\left\{
\begin{array}{llll}
e^{\frac{{i}\lambda\pi}{2}\sigma_{3}}, & \xi\in \widehat{\Omega}_{1}, \\
\\
\left(
\begin{matrix}
1&0\\
-e^{i\pi(\lambda+\alpha)}&1\\
\end{matrix}
\right),& \xi \in \widehat{\Omega}_{2},\\
\\
\left(
\begin{matrix}
1&0\\
e^{-i\pi(\lambda+\alpha)}&1\\
\end{matrix}
\right),& \xi\in \widehat{\Omega}_{3},\\
\\
e^{-\frac{{i}\lambda\pi}{2}\sigma_{3}}, & \xi\in \widehat{\Omega}_{4}, \\
\end{array}
\right.
\eq
where four sectors $\widehat{\Omega}_{j}, j=1,\ldots,4,$ illustrated in Figure 2, $\arg{\xi}\in (-\pi, \pi)$ and the matrix $Q_{1}(s)$ independent of $\xi,$ such that $\det\left(Q_{1}(s)\right)=1.$
\\
$(e)$ As $\xi \rightarrow s,$ the asymptotic behavior $\Phi(\xi,s)$ is given by
\bq\label{B52}
\Phi(\xi,s)=Q_{2}(s)\left(I+\mathcal{O}(\xi-s)\right)\left(\xi-s\right)^{\frac{\lambda}{2}\sigma_{3}},
\eq
where $\arg(\xi-s)\in (-\pi, \pi),$ and the matrix function $Q_{2}(s)$ independent of $\xi,$ such that $\det\left(Q_{2}(s)\right)=1.$

\subsection{Riemann-Hilbert problem for $R$}
\par
The finial transformation $S\rw R$ and $R(z)$ is defined as follows
\bq\label{B53}
R(z)=\left\{
\begin{array}{llll}
S(z)P^{(\infty)-1}(z),& z \in \mathbb{C}\setminus\{U(0,r)\cup U(1,r)\cup \Sigma_{S}\} ,\\
\\
S(z)P^{(0)-1}(z),& z\in U(0, r)\setminus \Sigma_{S},\\
\\
S(z)P^{(1)-1}(z),& z\in U(1, r)\setminus \Sigma_{S}.\\
\end{array}
\right.
\eq
Then, $R(z)$ satisfies the following RH problem.
\par
$(a)$ $R(z)$ is analytic in $\mathbb{C}\setminus \Sigma_{R}$, see Figure 3.
\par
$(b)$ $R(z)$ satisfies $R_{+}(z)=R_{-}(z)J_{R}(z)$ and the jump $J_{R}(z)$ is given by
\bq\label{B54}
J_{R}(z)=\left\{
\begin{array}{llll}
P^{(\infty)}(z)J_{S}(z)P^{(\infty)-1}(z),& z \in \Sigma_{R}\setminus \{\partial U(0,r)\cup \partial U(1,r)\}, \\
\\
P^{(0)}(z)P^{(\infty)-1}(z),& z\in \partial U(0, r),\\
\\
P^{(1)}(z)P^{(\infty)-1}(z),& z\in \partial U(1, r),\\
\end{array}
\right.
\eq
where $J_{S}$ denotes the jump matrices in (\ref{B41}).
\par
$(c)$ $R(z)=I+\mathcal{O}\left(z^{-1}\right)$ as $z\rw \infty.$
\par

\begin{center}
\begin{tikzpicture}
\begin{scope}[line width=2pt]
\draw(0,0) circle(0.8);
\draw(7,0) circle(0.8);
\fill (0,0) circle (1pt);
\fill (7,0) circle (1pt);

\draw[->,>=stealth] (0.8,0)--(3,0);
\draw[-] (2.6,0)--(6.2,0);

\draw[->,>=stealth] (7.8,0)--(9,0);
\draw[->,>=stealth] (3.4,1.6)--(3.5,1.6);
\draw[->,>=stealth] (3.4,-1.6)--(3.5,-1.6);
%\draw[->,>=stealth] (0,.6)--(0.1,0.6);
%\draw[->,>=stealth] (5,.6)--(5.1,0.6);

\draw[-] (7.8,0)--(11,0);
\draw (0.567,0.567) .. controls (3.4,1.9) and (3.6,2) .. (6.433,.567);
\draw (0.567,-0.567) .. controls (3.4,-1.9) and (3.6,-2) .. (6.433,-.567);
\node[below] at (0,0.05) { $0$};

\node[above] at (2.8,1.7) {$\widetilde{\Sigma}_{3}$};
\node[above] at (3.8,0) {$\widetilde{\Sigma}_{2}$};
\node[above] at (2.9,-1.6) {$\widetilde{\Sigma}_{1}$};
\node[below] at (7,0.05) { $1$};
\node[below] at (4.6,-2) { Figure 3. The contour  for $R(z).$};
\end{scope}
\end{tikzpicture}
\end{center}

By the matching conditions on the boundaries (\ref{B001}), (\ref{B002}), (\ref{B46}) and $\phi(z)$ in (\ref{B39}), one finds,
\bq\label{B55}
J_{R}(z)=\left\{
\begin{array}{llll}
I+\mathcal{O}\left(n^{-\frac{1}{2}}\right),& z\in \partial U(0, r)\cup \partial U(1,r),\\
\\
I+\mathcal{O}\left(n^{-1}\right),& z \in \widetilde{\Sigma}_{2},\\
\\
I+\mathcal{O}\left(e^{-cn}\right),& z \in \Sigma_{R}\setminus \{\partial U(0,r)\cup \partial U(1,r)\cup \widetilde{\Sigma}_{2}\},\\
\end{array}
\right.
\eq
where $c$ is a positive constant, and the error term is uniform for $z \in \Sigma_{R}.$ Then one finds,
\bq\label{B56}
\|J_{R}(z)-I\|_{L^{2}\cap L^{\infty}(\Sigma_{R})}=\mathcal{O}\left(n^{-\frac{1}{2}}\right).
\eq
Moreover,
\bq\label{B57}
R(z)=I+\frac{1}{i2\pi}\int_{\Sigma_{R}}\frac{R_{-}(\tau)\left(J_{R}(\tau)-I\right)}{\tau-z}d\tau,\;z\notin \Sigma_{R},
\eq
by the standard method and procedure of norm estimation of Cauchy operator and the approach in \cite{DeiftBook, DX1999}, one derives the following estimation from (\ref{B56}) and (\ref{B57}) that
\bq\label{B58}
R(z)=I+\mathcal{O}\left(n^{-\frac{1}{2}}\right),
\eq
where the error term is uniform for $z\in \mathbb{C},$ and $t\in (0,c]$ and $c$ is a finite positive constant.
\par
It now completes the nonlinear steepest descent analysis from $Y$ to $R.$

\section{Proof of Theorem 1}

Tracing back of the above nonlinear steepest descent analysis, we can derive the asymptotic approximation of the matrix function $Y(z)$ in the whole complex plane, especially the special case $Y(0).$ Note that the auxiliary function $R_{n}(t),$ the recurrence coefficient $\al_{n}(t),$ the logarithmic derivative of the Hankel determinant $H_{n}(t)$ and the logarithmic derivative of the leading coefficient $\gamma_{n}(t)$ can be expressed in terms of the entries of the matrix function $Y(0)$ in the Lemma 1, then we can derive one-term asymptotic approximation has a relation with the ${\rm P_{\uppercase\expandafter{\romannumeral5}}}$ equation in (\ref{B5}).
\\
\par
{\bf Proof of Theorem 1.} Taking the inverse transformations of $Y\rightarrow T\rightarrow S\rightarrow R$ and the exact formulas in (\ref{B37}), (\ref{B40}), (\ref{B47})-(\ref{B48}) and (\ref{B53}), one finds,
\bq\label{BB58}
Y_{+}(4nx)=M_{0}R(x)E(x)\Phi_{-}(n^{2}\phi^{2}(x),4nt)e^{\frac{i\pi(\al+\la-1)}{2}\sigma_{3}}
\left(
\begin{matrix}
1&0\\
1&1\\
\end{matrix}
\right)
\left(w(4nx)\right)^{-\frac{1}{2}\sigma_{3}},
\eq
and
where $M_{0}=(-1)^{n}(4n)^{\left(n+\frac{\al+\la}{2}\right)\sigma_{3}}e^{\frac{n\ell}{2}\sigma_{3}}$ and $x\in (0,r),$ $\Phi(\xi, s)$ satisfies the RH problem (\ref{B49})-(\ref{B52}), $E(z)$ and $R(z)$ are given by (\ref{B48}), (\ref{B58}), respectively.
\par
To find out $Y(0),$ we need to evaluate the asymptotic expansions of $E(x)$ and $\Phi_{-}(n^{2}\phi^{2}(x),4nt)$ at $x=0.$
\par
By the concrete expression in (\ref{BB39}), the Maclaurin expansion $\phi(x)=i4x^{\frac{1}{2}}+\mathcal{O}\left(x^{\frac{3}{2}}\right),$ $x\rightarrow 0$ and $E(z)$ in (\ref{B48}), after some calculations, one finds,
\bq\label{B59}
E(x)=2^{-(\al+\la)\sigma_{3}}2^{-\frac{1}{2}}\left(I-i\sigma_{1}\right)
\left(
\begin{matrix}
0&1\\
-1&\al+\la\\
\end{matrix}
\right)
(4n)^{\frac{1}{2}\sigma_{3}}+\mathcal{O}(x),\;\; x\rightarrow 0.
\eq
With the aid of the asymptotic behavior of $\Phi(\xi, s)$ in (\ref{B51}), one follows that
\begin{align}\label{B60}
\Phi&_{-}(n^{2}\phi^{2}(x),4nt)e^{\frac{i\pi(\al+\la-1)}{2}\sigma_{3}}\left(
\begin{matrix}
1&0\\
1&1\\
\end{matrix}
\right)
\left(w(4nx)\right)^{-\frac{1}{2}\sigma_{3}}\nonumber\\
&=Q_{1}(4nt)\left(I+\mathcal{O}\left(L_{n}(x)\right)\right)\left(L_{n}(x)\right)^{\frac{\al}{2}\sigma_{3}}\left(w(4nx)\right)^{-\frac{1}{2}\sigma_{3}},
\end{align}
where $L_{n}(z)=n^{2}\phi^{2}(z)=\xi$ plays the role of a conformal mapping for $z \in U(0,r),$ $r$ is sufficiently small, $r>\frac{t}{4n},$ $t\in (0, c]$ and $c$ is a finite positive constant.
\par
For convenience, we restate the equations (1.10) and (2.35) in \cite{CFC2017} are satisfied by the auxiliary matrix function $A_{1}(s)$ as follows
\bq\label{B61}
A_{1}(s)=\frac{\alpha}{2}Q_{1}(s)\sigma_{3}Q_{1}^{-1}(s),
\eq
and
\bq\label{B62}
A_{1}(s)=\left(
\begin{matrix}
-\frac{1}{4}+\frac{1}{2}r(s)+q'(s)&-i\left(\frac{1}{2}+r'(s)\right)\\
i\left(-q(s)+t'(s)\right)&\frac{1}{4}-\frac{1}{2}r(s)-q'(s)\\
\end{matrix}
\right),
\eq
where $q(s)$ and $t'(s)$ in terms of $r(s)$ and its derivatives,
\bq\label{B64}
t'(s)=\frac{\lambda^{2}-[2sr''(s)-2r'(s)r(s)+r'(s)]^2}{4r'(s)},
\eq
\bq\label{B65}
q(s)=-\frac{8s^2[r''(s)]^2-r'(s)[4r(s)(r(s)-1)(2r'(s)+1)+2r'(s)+4\lambda^2-4\alpha^2+1]-2\lambda^2}{8r'(s)(2r'(s)+1)},
\eq
with $\al>0,$ $\al+\la+1>0.$
\par
By the equivalence of (\ref{B61}) and (\ref{B62}), and some calculations, one gets,
\bq\label{B63}
Q_{1}(s)=[Q_{1}(s)]_{11}\left(
\begin{matrix}
1&\frac{i(4q'(s)2r(s)-1-2\al)}{4(q(s)-t'(s))}\\
-\frac{i(4q'(s)+2r(s)-1-2\al)}{2(1+2r'(s))}&1\\
\end{matrix}
\right)c_{2}^{\sigma_{3}},
\eq
where $[Q_{1}(s)]_{11}$ is given by
\bq\label{B67}
[Q_{1}(s)]_{11}=\sqrt{\frac{2(1+2r'(s))(q(s)-t'(s))}{\al\left(4q'(s)+2r(s)-1-2\al\right)}},
\eq
and $t'(s),$ $q(s)$ are given by (\ref{B64}), (\ref{B65}), respectively. $c_{2}$ is an arbitrary non-zero constant, the more details can be found in the Section 2 in \cite{CFC2017}.
\par
With the asymptotic expansion of $E(x)$ in (\ref{B59}) and $Q_{1}(s)$ in (\ref{B63}), as $x\rightarrow 0,$
{\small
\bq\label{B66}
[E(x)Q_{1}(s)]_{11}=\frac{in^{\frac{1}{2}}[Q_{1}(s)]_{11}c_{2}}{2^{\al+\la-\frac{1}{2}}}\left[1-\frac{\left(1-i(\al+\la)\right)\left(4q'(s)+2r(s)-1-2\al\right)}{8(1+2r'(s))n}\right]+\mathcal{O}(x),
\eq
}
{\small
\bq\label{B68}
[E(x)Q_{1}(s)]_{12}=\frac{n^{\frac{1}{2}}\left(1-i(\al+\la)\right)[Q_{1}(s)]_{11}}{2^{\al+\la+\frac{3}{2}}c_{2}}\left[-\frac{4q'(s)+2r(s)-1-2\al}{\left(1-i(\al+\la)\right)\left(q(s)-t'(s)\right)}+\frac{1}{n}\right]+\mathcal{O}(x),
\eq
}
and
{\small
\bq\label{B69}
[E(x)Q_{1}(s)]_{21}=-n^{\frac{1}{2}}2^{\al+\la+\frac{1}{2}}[Q_{1}(s)]_{11}c_{2}\left[1+\frac{\left(1+i(\al+\la)\right)\left(4q'(s)+2r(s)-1-2\al\right)}{8\left(1+2r'(s)\right)n}\right]+\mathcal{O}(x),
\eq
}
where $[Q_{1}(s)]_{11}$ is in (\ref{B67}).
\par
By the formulas (\ref{BB58})-(\ref{B60}) and (\ref{B66})-(\ref{B69}), it follows that
\begin{align}\label{B70}
\left[Y(0)\right]_{11}\left[Y(0)\right]_{12}&=(4n)^{2n+\al+\la}e^{n\ell}[E(0)Q_{1}(s)]_{11}[E(0)Q_{1}(s)]_{12}\left(1+\mathcal{O}\left(n^{-\frac{1}{2}}\right)\right)\nonumber\\
&=-\frac{i}{\al}n^{2n+1+\al+\la}4^{2n}e^{n\ell}\left(1+2r'(s)\right)\left(1+\mathcal{O}\left(n^{-\frac{1}{2}}\right)\right),
\end{align}
and
\begin{align}\label{B71}
\left[Y(0)\right]_{12}\left[Y(0)\right]_{21}&=[E(0)Q_{1}(s)]_{12}[E(0)Q_{1}(s)]_{21}\left(1+\left(n^{-\frac{1}{2}}\right)\right)\nonumber\\
&=\frac{n}{\al}\left(1+2r'(s)\right)\left(1+\mathcal{O}\left(n^{-\frac{1}{2}}\right)\right),
\end{align}
where $s=4nt$ and the above error terms are uniform for $t\in (0,c],$ $c$ is a finite positive constant.
\par
From the formulas (\ref{B31}) and (\ref{B71}), one finds,
\bq\label{B72}
H_{n}'(t)=-2nr'(s)\left(1+\mathcal{O}\left(n^{-\frac{1}{2}}\right)\right), \; \; s=4nt.
\eq
\par
Integrating the above equation from $0$ to $t$ on both sides and combining with the boundary conditions $r(0)=\frac{1-4(\al+\la)^{2}}{8}$ and $H_{n}(0)=0,$ then one finds,
\bq\label{B73}
H_{n}(t)=\frac{1-4(\al+\la)^{2}-8r(s)}{16}\left(1+\mathcal{O}\left(n^{-\frac{1}{2}}\right)\right),
\eq
where $s=4nt$ and the above formula is uniform for $t\in (0,c],$ $c$ is a positive constant. With the definition of $H_{n}(t)$ in (\ref{B12}) and integrating once again,
then one derives the formula (\ref{B7}) and the Theorem 1 is proved.

\section{Proof of Theorem 3}

The asymptotic behaviors of the leading coefficients $\gamma_{n}(t)$ can be derived from the asymptotic behavior of $Y(z)$ for large $z.$
\par
{\bf Proof of Theorem 3.} Taking a series inverse transformations from the concrete formulas (\ref{B37}), (\ref{B40}), (\ref{B43}) and (\ref{B53}), then one obtains the expression of $Y(z)$ as follows
\bq\label{B74}
Y(4nz)=(4n)^{\left(n+\frac{\al+\la}{2}\right)\sigma_{3}}e^{\frac{n\ell}{2}\sigma_{3}}R(z)P^{(\infty)}(z)e^{\left(g(z)-\frac{\ell}{2}\right)\sigma_{3}}\left(4n\right)^{-\frac{\al+\la}{2}\sigma_{3}},\;\; z\in \mathbb{C}\setminus [0,+\infty).
\eq
From the explict expression of $P^{(\infty)}(z)$ in (\ref{B43}), then it follows its expansion for large $z$ as follows
\bq\label{B75}
P^{(\infty)}(z)=I-2^{-(\al+\la)\sigma_{3}}\left(\frac{\al+\la}{4}\sigma_{3}+\frac{1}{4}\sigma_{2}\right)2^{(\al+\la)\sigma_{3}}\frac{1}{z}+\mathcal{O}\left(\frac{1}{z^{2}}\right).
\eq
By the definition of $g(z)$ in (\ref{B38}), it is easy to verify that
\bq\label{B76}
e^{ng(z)\sigma_{3}}=\left(I-\frac{n}{4z}\sigma_{3}+\mathcal{O}\left(\frac{1}{z^{2}}\right)\right)z^{n\sigma_{3}}.
\eq
Moreover, by the estimation of (\ref{B56}) and the formula (\ref{B57}), one obtains,
\bq\label{B77}
R(z)=I+\mathcal{O}\left(n^{-\frac{1}{2}}z^{-1}\right),
\eq
for large $n,$ $z\rightarrow \infty$ and uniformly for $t\in (0,c],$ $c$ is a finite positive constant.
\par
Combining the above asymptotic expansions, then one finds,
\bq\label{B78}
Y(4nz)=I+\frac{Y_{1}}{\widehat{z}}+\mathcal{O}\left(\frac{1}{\widehat{z}^{2}}\right),
\eq
where $\widehat{z}=4nz,$ $Y_{1}$ is independent of $z$ and its expression is given as follows
\bq\label{B79}
Y_{1}=4n\left(
\begin{matrix}
-\frac{n+\al+\la}{4}+\mathcal{O}\left(n^{-\frac{1}{2}}\right)& i4^{2n-1}n^{2n+\al+\la}e^{n\ell}\left(1+\mathcal{O}\left(n^{-\frac{1}{2}}\right)\right)\\
-i4^{-2n-1}n^{-(2n+\al+\la)}e^{-n\ell}\left(1+\mathcal{O}\left(n^{-\frac{1}{2}}\right)\right)& \frac{n+\al+\la}{4}+\mathcal{O}\left(n^{-\frac{1}{2}}\right)\\
\end{matrix}
\right).
\eq
With the aid of the general result of (3.30) in \cite{DeiftBook}, one gets,
\bq\label{B80}
\left(\gamma_{n}(t)\right)^{2}=i\left(2\pi[Y_{1}]_{12}\right)^{-1}=\frac{1}{\pi}2^{-4n-1}n^{-(2n+\al+\la+1)}e^{n\ell}\left(1+\mathcal{O}\left(n^{-\frac{1}{2}}\right)\right),
\eq
uniformly for $t\in(0,c]$ and $c$ is a finite positive constant.
\par
Recalling the quantity $[Y(0)]_{11}[Y(0)]_{12}$ in (\ref{B70}), the above approximation of $\left(\gamma_{n}(t)\right)^{2}$ and (\ref{B29}), then one obtains,
\bbq\label{B81}
R_{n}(t)=-2r'(s)\left(1+\mathcal{O}\left(n^{-\frac{1}{2}}\right)\right),
\eeq
this is (\ref{B15}). With above equation and (\ref{B34}), it follows (\ref{B17}) immediately.
\par
Obviously, (\ref{B73}) gives (\ref{B16}). From equations (\ref{B72}), (\ref{B73}) and (\ref{B35}), it is easy to verify (\ref{B18}).
\par
Substituting (\ref{B70}) and (\ref{B80}) into the right hand side of (\ref{B32}), then one follows,
\bq\label{B82}
\frac{d}{dt}\ln{\gamma_{n}(t)}=-r'(s)\left(1+\mathcal{O}\left(n^{-\frac{1}{2}}\right)\right).
\eq
Integrating the above equation on both sides, one gets the equation as follows
\bq\label{B83}
\ln{\frac{\gamma_{n}(t)}{\gamma_{n}(0)}}=\frac{r(s)-r(0)}{4n}\left(1+\mathcal{O}\left(n^{-\frac{1}{2}}\right)\right),
\eq
with the boundary conditions of $r(0)$ in (\ref{B2}), $\gamma_{n}(0)=\left(\frac{D_{n}(0; \al, \la)}{D_{n+1}(0; \al, \la)}\right)^{\frac{1}{2}},$ and $D_{n}(0; \al, \la)$ in (\ref{B8}), then it follows (\ref{B19}) immediately.
\par
At last, by the transform of (\ref{B0001}), $s=4nt$ and substituting (\ref{B15}) into the finite $n$ ${\rm P_{\uppercase\expandafter{\romannumeral5}}}$ equation in (\ref{B13}), then it follows the third-order equation (\ref{B20}). Similarly, inserting (\ref{B72}) into the $\sigma-$form ${\rm P_{\uppercase\expandafter{\romannumeral5}}}$ equation in (\ref{B14}), one obtains the equation (\ref{B21}).
\par
It completes the proof of Theorem 3.

\par

{\bf Acknowledgement.}
\par
This work of En-Gui Fan was  supported by the National Science Foundation of China under Project No.11671095.
\\


\begin{thebibliography}{}

\bibitem{AM2001} {M. Adler} and {P. Van Moerbeke}, {\em Hermitian, symmetric and symplectic random ensembles:PDEs for the distribution of the spectrum}, {Ann. Math.} {153}: {149-189} ({2001}).

\bibitem{BasorChen} {E. Basor} and {Y. Chen}, {\em Perturbed Laguerre unitary ensembles, Hankel determinants, and information theory}, {Math. Meth. Appl. Sci.} {38(18)}: {4840--4851} ({2015}).

\bibitem{BEH2006} {M. Bertola}, {B. Eynard} and {J. Harnad}, {\em Semiclassical orthogonal polynomials, matrix models and isomonodromic tau functions}, {Commun. Math. Phys.} {263}: {401--437} ({2006}).

\bibitem{B2009} {M. Bertola}, {\em Moment determinants as isomonodromic tau functions}, {Nonlinearity}, {22}: {29--50} ({2009}).

\bibitem{BMM2015} {L. Brightmore}, {F. Mezzadri} and {M. Y. Mo}, {\em A matrix model with a singular weight and Painlev\'e \uppercase\expandafter{\romannumeral3}}, {Commun. Math. Phys.} {333}: {1317-1364} ({2015}).

\bibitem{CFC2017} {M. Chen}, {Y. Chen} and {E. Fan}, {\em Critical edge behavior in the perturbed Laguerre ensemble and the Painlev\'e \uppercase\expandafter{\romannumeral5} transcendent}, {arXiv:1711.04383v1}.

\bibitem{CBC2016} {H. M. Chen}, {M. Chen}, {G. Blower} and {Y. Chen}, {\em Single-user MIMO system, Painlev\'e transcendents and double scaling}, {J. Math. Phys.} {58}, {123502} ({2017}).

\bibitem{YangMcky2012} {Y. Chen} and {M. R. McKay}, {\em Coulomb fluid, Painlev\'{e} transcendents and the information theory of MIMO systems},
 {IEEE Trans. Inf. Theory} {58(7)}: {4594--4634} ({2012}).

\bibitem{ChenIts2010} {Y. Chen} and {A. Its}, {\em Painlev\'{e} \uppercase\expandafter{\romannumeral3} and a singular linear statistics in Hermitian random matrix ensembles, I}, {J. Approx. Theory} {162(2)}: {270--297} ({2010}).

\bibitem{DIK2011} {P. Deift}, {A. Its} and {I. Krasovsky}, {\em Asymptotics of Toeplitz, Hankel, and Toeplitz+Hankel determinants with Fisher-Hartwig singularties}, {Ann. Math.} {174(2)}: {1243--1299} ({2011}).

\bibitem{DKV1999} {P. Deift}, {T. Kriecherbauer}, {K. T.-R. McLaughlin}, {S. Venakides} and {X. Zhou}, {\em Uniform asymptotics for polynomials orthogonal with respect to varying exponential weights and  applications to universality questions in random matrix theory}, {Commun. Pure Appl. Math.} {52(11)}: {1335-1425} ({1999}).

\bibitem{DX1999} {P. Deift}, {T. Kriecherbauer}, {K. T. R. McLaughlin}, {S. Venakides} and {X. Zhou}, {\em Strong asymptotics of orthogonal polynomials with respect to exponential weights}, {Commun. Pure Appl. Math.} {52(12)}: {1491-1552} ({1999}).

\bibitem{DeiftBook} {P. Deift}, {\em Orthogonal polynomials and random matrices: A Riemann-Hilbert approach, In: Courant Lecture Notes}, Vol. 3, New York University, New York, 1999.

\bibitem{FIK1992} {A. S. Fokas}, {A. R. Its} and {A. V. Kitaev}, {\em The isomonodromy approach to matrix models in 2D quantum gravity}, {Commun. Math. Phys.} {147(2)}: {395-430} ({1992}).

\bibitem{FIK2006} {A. S. Fokas}, {A. R. Its}, {A. A. Kapaev} and {V. Yu Novokshenov}, {\em Painlev\'e transcendents: the Riemann-Hilbert approach}, {In: AMS Mathematical Surveys and Monographs, Vol. 128.
American Mathematical Society Providence, RI 2006}.

\bibitem{GLS2002} {V. I. Gromak}, {I. Laine} and {S. Shimomura}, {\em Painlev\'e differential equations in the complex plane}, {Vol. 28}, {Walter de Gruyter, 2002}.

\bibitem{JM1981} {M. Jimbo} and {T. Miwa}, {\em Monodromy perserving deformation of linear ordinary differential equations with rational coefficents. \uppercase\expandafter{\romannumeral2}}, {Physica D: Nonlinear Phenomena} {2}: {407--448} ({1981}).

\bibitem{JMU1981} {M. Jimbo}, {T. Miwa} and {K. Ueno}, {\em Monodromy preserving deformation of linear ordinary differential equations with rational coefficients: I. General theory and T-function}, {Physica D: Nonlinear Phenomena} {2(2)}: {306--352} ({1981}).

\bibitem{KH1999} {A. A. Kapaev} and {E. Hubert}, {\em A note on the Lax pairs for Painlev\'e equations}, {J. Phys. A: Math. Gen.} 32: 8145-8156 (1999).

\bibitem{M2004} {M. L. Mehta}, {\em Random matrices}. {Third edition}, {San Diego, CA: Elsevier Inc.}, {2004}.

\bibitem{MM2009} {F. Mezzadri} and {M. Y. Mo}, {\em On an average over the Gaussian unitary ensemble}, {Int. Math. Res. Not.} {2009 (18)}: {3486-3515} ({2009}).

\bibitem{K1981} {K. Okamoto}, {\em On the $\tau$-function of the Painlev\'e equations}, {Physica D: Nonlinear Phenomena} {2(3)}: {525--535} ({1981}).

\bibitem{OLC2010} {F. Ovler}, {D. Lozier}, {R. Boisvert} and {C. Clark}, {\em NIST Handbook of Mathematical functions}, {Cambridge University Press}, {Cambridge} {2010}.


\bibitem{OVAKE2007} {V. A. Osipov} and {E. Kanzieper}, {\em Are bosonic replicas faulty ?}, {Phys. Rev. Lett.} {99(5)}: {050602} (2007).


\bibitem{Szego1939} {G. Szeg{\"o}}, {\em Orthogonal Polynomials}. {American Mathematical Society}, {New York}, {vol 23}, {American Mathematical Society Colloquium Publications}, {1939}.


\bibitem{TM2013} {C. Texier} and {S. N. Majumdar}, {\em Wigner tiem-delay distribution in chaotic catities and freezing transition}, {Phys. Rev. Lett.} {110}: {250602} (2013).


\bibitem{Vors1987} {A. Voros}, {\em Spectral functions, special functions and the selberg Zeta function}, {Commun. Math. Phys.} {110}: {439--465} ({1987}).


\bibitem{XDZ2014} {S. X. Xu}, {D. Dai} and {Y. Q. Zhao}, {\em Critical edge behavior and the Bessel to Airy transition in the singularly perturbed Laguerre unitary ensemble}, {Commun. Math. Phys.} {332(3)}: {1257-1296} ({2014}).


\bibitem{XDZ2015} {S. X. Xu}, {D. Dai} and {Y. Q. Zhao}, {\em Painlev\'e \uppercase\expandafter{\romannumeral3} asymptotics of Hankel determinants for a singularly perturbed Laguerre weight}, {J. Approx. Theory} {192}: {1-18} ({2015}).

\end{thebibliography}
     \end{document}